\documentclass[conference]{IEEEtran}
\usepackage{cite}
\ifCLASSINFOpdf
\usepackage[pdftex]{graphicx}
\graphicspath{./figures}
\DeclareGraphicsExtensions{.pdf,.jpeg,.png}
\else
\fi
\usepackage[cmex10]{amsmath}
\usepackage{array}
\usepackage{url}
% added by us
\usepackage[utf8]{inputenc}  
\usepackage{citesort}
\usepackage[T1]{fontenc} 
\usepackage{mathtools}
\usepackage{amsfonts}
\usepackage{amssymb}
\usepackage{graphicx}
\usepackage{bm}
\usepackage{amsthm}
\usepackage{balance}
\def\(({\left(}
\def\)){\right)}                       
\def\[[{\left[}
\def\]]{\right]}
%
%%%added by Jiaming
\usepackage{color}

\usepackage{xspace,prettyref}

\newcommand{\iiddistr}{{\stackrel{\text{\iid}}{\sim}}}

\newcommand{\reals}{{\mathbb{R}}}

  % symmetric matrices

\newcommand{\eexp}{{\rm e}}

\newcommand{\diff}{{\rm d}}

\newcommand{\expect}[1]{\mathbb{E}\left[ #1 \right]}

\newcommand{\prob}[1]{ \mathbb{P}\left\{ #1 \right\} }

\newcommand{\var}{\mathsf{var}}

\newcommand{\iid}{i.i.d.\xspace}
% for prettyref.sty
\newrefformat{eq}{(\ref{#1})}
\newrefformat{chap}{Chapter~\ref{#1}}
\newrefformat{sec}{Section~\ref{#1}}
\newrefformat{alg}{Algorithm~\ref{#1}}
\newrefformat{fig}{Fig.~\ref{#1}}
\newrefformat{tab}{Table~\ref{#1}}
\newrefformat{rmk}{Remark~\ref{#1}}
\newrefformat{clm}{Claim~\ref{#1}}
\newrefformat{def}{Definition~\ref{#1}}
\newrefformat{cor}{Corollary~\ref{#1}}
\newrefformat{lmm}{Lemma~\ref{#1}}
\newrefformat{prop}{Proposition~\ref{#1}}
\newrefformat{app}{Appendix~\ref{#1}}
\newrefformat{hyp}{Hypothesis~\ref{#1}}
\newrefformat{thm}{Theorem~\ref{#1}}
\newrefformat{ass}{Assumption~\ref{#1}}

% \newcommand{\opnorm}[1]{\lnorm{#1}{\rm op}}

% inner product

\newcommand{\Iprod}[2]{\langle #1, #2 \rangle}
% 12/02/2007

\newcommand{\calA}{{\mathcal{A}}}
\newcommand{\calB}{{\mathcal{B}}}

\newcommand{\calE}{{\mathcal{E}}}

\newcommand{\calG}{{\mathcal{G}}}
\newcommand{\calH}{{\mathcal{H}}}

\newcommand{\calJ}{{\mathcal{J}}}

\newcommand{\calN}{{\mathcal{N}}}

\newcommand{\calS}{{\mathcal{S}}}

\newcommand{\be}{\begin{equation}}
\newcommand{\ee}{\end{equation}}
\newcommand{\bea}{\begin{eqnarray}}
\newcommand{\eea}{\end{eqnarray}}

\DeclareMathAlphabet{\varmathbb}{U}{bbold}{m}{n}

\newcommand{\EE}{\mathbb{E}}

%\newcommand{\bm}{{\bf{m}}}

% Linear algebra macros
\newcommand{\vct}[1]{\bm{#1}}
\newcommand{\mtx}[1]{\bm{#1}}
\renewcommand{\hat}{\widehat}
\renewcommand{\tilde}{\widetilde}

\newtheorem{theorem}{Theorem}[section]
\newtheorem{lemma}[theorem]{\textbf{Lemma}}
\newtheorem{thm}[theorem]{\textbf{Theorem}}

\begin{document}
\title{
%Guerra Interpolation for the 
Mutual Information in 
%Symmetric 
Rank-One Matrix Estimation}
\author{\IEEEauthorblockN{ Florent
    Krzakala\IEEEauthorrefmark{1}\IEEEauthorrefmark{2}\IEEEauthorrefmark{4}, Jiaming Xu\IEEEauthorrefmark{4}
 and
    Lenka Zdeborov\'a\IEEEauthorrefmark{3}\IEEEauthorrefmark{4}} 
 \IEEEauthorblockA{\IEEEauthorrefmark{1} Laboratoire de Physique
    Statistique, Sorbonnes Universit\'es \&
    Universit\'e Pierre \& Marie Curie, 75005    Paris}
  \IEEEauthorblockA{\IEEEauthorrefmark{2} Laboratoire de Physique
    Statistique (CNRS UMR-8550), PSL Universit\'es \& \'Ecole Normale Sup\'erieure, 75005
    Paris}
 \IEEEauthorblockA{\IEEEauthorrefmark{4} Simons Institute for the Theory of Computing, University of California, Berkeley, Berkeley, CA,  94720 }
 \IEEEauthorblockA{\IEEEauthorrefmark{3} Institut de Physique
   Th\'eorique, CNRS, CEA, Universit\'e Paris-Saclay, F-91191,  Gif-sur-Yvette, France.}
}
\maketitle
\IEEEpeerreviewmaketitle
\begin{abstract}
  We consider the estimation of a $n$-dimensional vector $\vct{x}$
  from the knowledge of noisy and possibility non-linear element-wise
  measurements of $\vct{x} \vct{x}^T$, a very generic problem that
  contains, e.g. stochastic $2$-block model, submatrix localization or
  the spike perturbation of random matrices. We use an interpolation
  method proposed by Guerra \cite{guerra2003broken} and later refined
  by Korada and Macris \cite{korada2009exact}. We prove that the Bethe
  mutual information (related to the Bethe free energy and conjectured
  to be exact by Lesieur {et al.} \cite{lesieur2015mmse} on the basis
  of the non-rigorous cavity method) always yields an upper bound to
  the exact mutual information. We also provide a lower bound using a
  similar technique.  For concreteness, we illustrate our findings on
  the sparse PCA problem, and observe that (a) our bounds match for a
  large region of parameters and (b) that it exists a phase transition
  in a region where the spectum remains uninformative. While we
  present only the case of rank-one symmetric matrix estimation, our
  proof technique is readily extendable to low-rank symmetric matrix
  or low-rank symmetric tensor estimation.
\end{abstract}
\section{Introduction and Main results}
\label{sec:results}
The estimation of low-rank matrices from their noisy, incomplete, or
non linear measurements is a problem that has a wide range of
applications of practical interest in machine learning and statistics,
ranging from the sparse PCA \cite{6875223}, and community detection
\cite{deshpande2015asymptotic} to sub-matrix
localization\cite{chenxu14,hajek2015submatrix}. We shall consider the
setting where the rank-one matrix to be estimated is created as:
\begin{equation}
    \mtx{W}=\frac{1}{\sqrt{n}} \vct{x} \vct{x}^T \, ,\label{modelxx}
\end{equation}
where $\vct{x}$ is a $n$-dimensional vector whose elements were chosen
independently at random from a prior distribution $p({x})$. The matrix
$\mtx{W}$ is then observed element-wise through a noisy non-linear
output channel $P_{\rm out}(Y_{ij}|W_{ij})$, with
$i,j\!=\!1,\dots,n$. We assume the noise to be symmetric so that $Y_{ij}\!=\!Y_{ji}$. 
The goal
is to estimate the unknown vector $\vct{x}$ from $\mtx{Y}$ up to a global
flip of sign, or equivalently 
the unknown rank-one matrix $\mtx{W}$ from $\mtx{Y}$. 
Throughout this paper, we assume that $p({x})$ and $P_{\rm out}(y,w)$ are
independent of $n$, and are known. 

We consider here an information-theoretic viewpoint and analyze the
mutual information for the above model defined as
$I(\mtx{W};\mtx{Y})=I(\vct{x};\mtx{Y})=\EE_{\vct{x}, \mtx{Y}}\log\{
P(\vct{x},\mtx{Y})/[p(\vct{x})P(\mtx{Y})] \}$. Up to a simple term, see
eq.~(\ref{def:fnrg}), the
mutual information is related to the free energy, which is the
fundamental quantity usually considered in statistical physics
\cite{NishimoriBook01,MezardMontanari09}. Recently, an
explicit single-letter characterization of the mutual information
between the noisy observation and the vector to be recovered has been
computed in some special cases of our setting
\cite{6875223,deshpande2015asymptotic}. The general formula has been
derived by Lesieur {\it et al.}
\cite{lesieur2015phase,lesieur2015mmse} on the basis of the heuristic
cavity method from statistical mechanics
\cite{MezardParisi87b,NishimoriBook01,MezardMontanari09}. We shall
refer to the formula conjectured in \cite{lesieur2015mmse} as the {\it
  Bethe mutual information}.  In this contribution, we use a rigorous
technique that also originated in physics, the so-called Guerra
interpolation \cite{guerra2003broken,korada2009exact}, to prove that
the Bethe mutual information provides always an upper bound. By a
variant of the Guerra interpolation, we also provide a lower bound on
the mutual information that matches the upper bound for a sizable range
of parameters. 

\subsection{Main results}
\label{sec:theorems}
Our first result is a rigorous proof of a conjecture from Lesieur {\it et al.}
\cite{lesieur2015mmse}  of channel universality. 
In the context of community detection in graphs with growing average degrees, 
an equivalence between Bernoulli channel and Gaussian channel has been 
proven already in \cite{deshpande2015asymptotic}. 

\begin{thm}[Channel Universality]   \label{th:universlity}
  Assume model (\ref{modelxx}) with a prior
  $p(x)$ having a finite support, and the
  output channel $P_{\rm out}(y|w)$ such that at $w\!=\!0$, $\log P_{\rm out}(y|w)$
  is thrice differentiable with bounded second and third derivatives and $\mathbb{E}_{P_{\rm out}(y|0)} [  | \partial_w \log  P_{\rm out}(y|w)|_{w=0} |^3] =O(1)$. 
  Then the mutual information per variable satisfies 
  \begin{equation}
  I(\mtx{W};\mtx{Y} ) = I (\mtx{W}; \mtx{W}+ \sqrt{\Delta} \; \mtx{\xi} )  +O(\sqrt{n} ), 
  \end{equation}
  where $\mtx{\xi}$ is a symmetric matrix such that $\xi_{ij} \iiddistr \calN(0,1)$ for $i \le j$, 
  and $\Delta$  is the inverse Fisher information (evaluated at $w\!=\!0$) of the
  channel $P_{\rm out}(y|w)$:
  \begin{equation}
    \frac{1}{\Delta} \equiv \mathbb{E}_{P_{\rm out}(y|0)} \left[      \left( \frac{\partial \log P_{\rm out}(y|w)}{\partial w}\Big|_{y,0}\right)^2\right]\, . \label{FisherInformation}
  \end{equation}
%  Assume model (\ref{modelxx}) with rank $1$, a separable prior
%  $P_{\rm X}(\bx)$ such that $\EE_{X}(x^8)$ is finite, and a
%  differentiable output function $P_{\rm out}(y|w)$ with bounded
%  second derivatives at $w\!=\!0$. Then the mutual information
%  $I(X;Y)/n$ per variable between the observations $Y$ and the unknown
%  $X$, is with high probability asymptotically equal to one of a
%  Gaussian output:
%  % 
%  \be P_{\rm out}^{\rm eff}(s|w)={\cal N}(w,\Delta) \,
%  \text{with}~s = \Delta \frac{\partial \log P_{\rm out}(y|w)}{\partial w}\Big|_{y,0}\label{FiscerScore} \ee
%  % 
% where $\Delta$  is the inverse Fisher information (evaluated at $w\!=\!0$) of the
%  channel $P_{\rm out}(y|w)$:
%  \begin{equation}
%    \frac{1}{\Delta} \equiv \mathbb{E}_{P_{\rm out}(y|0)} \left[      \left( \frac{\partial \log P_{\rm out}(y|w)}{\partial w}\Big|_{y,0}\right)^2\right]\, . \label{FisherInformation}
%  \end{equation}
%  \label{th:universlity}
\end{thm}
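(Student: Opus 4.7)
The central fact driving the proof is that $W_{ij} = x_i x_j / \sqrt n = O(n^{-1/2})$ almost surely, since the prior has bounded support. Consequently $\log P_{\rm out}(Y_{ij}|W_{ij})$ admits a Taylor expansion in $w$ around $0$, and only the first two terms matter at leading order. My plan is therefore a local-asymptotic-normality argument: show that, up to error $O(\sqrt n)$, the full observation $\mathbf{Y}$ is statistically equivalent to a Gaussian channel of variance $\Delta = 1/\mathbb{E}[S(Y)^2]$, where $S(y) = \partial_w \log P_{\rm out}(y|w)|_{w=0}$.

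First I would write, for each pair $i \le j$,
\[
\log\frac{P_{\rm out}(Y_{ij}|W_{ij})}{P_{\rm out}(Y_{ij}|0)} = W_{ij}\,S(Y_{ij}) + \tfrac12 W_{ij}^2\,T(Y_{ij}) + R_{ij},
\]
with $T(y) = \partial_w^2 \log P_{\rm out}(y|w)|_{w=0}$ and $|R_{ij}| \le C|W_{ij}|^3$ thanks to the bounded-second-and-third-derivative hypothesis. Summing over the $O(n^2)$ pairs gives $\sum_{i\le j} R_{ij} = O(\sqrt n)$ deterministically, and this remainder can be absorbed into the claimed error. I would then record three moment identities that align the truncated expansion with its Gaussian counterpart: $\mathbb{E}_{P_{\rm out}(y|0)}[S(Y)] = 0$, $\mathbb{E}[S(Y)^2] = 1/\Delta$ (by definition), and the Fisher identity $\mathbb{E}[T(Y)] = -1/\Delta$. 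For the Gaussian channel, $\log P^G(y|w) - \log P^G(y|0) = wy/\Delta - w^2/(2\Delta)$ is already exactly quadratic, with ``score'' $Y/\Delta$ having the same mean and variance as $S(Y)$ under the null, and a deterministic quadratic coefficient $-1/\Delta$ matching $\mathbb{E}[T(Y)]$. So at the level of the truncated Taylor expansion, the two channels are indistinguishable through their first two moments.

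To convert this moment match into a bound on the mutual informations, I would perform a Lindeberg-style telescoping swap: enumerate the $N = \binom{n+1}{2}$ observations and define a chain of hybrid models in which the first $k$ entries are drawn from $P_{\rm out}$ and the rest from the Gaussian channel, with a common latent $\mathbf{x}$. For each swap $k-1 \to k$, I would bound the change in mutual information by $O(n^{-3/2})$ using the matching of the first two moments plus the third-moment hypothesis $\mathbb{E}[|S(Y)|^3] = O(1)$; summing over $N = O(n^2)$ swaps yields the claimed $O(\sqrt n)$ bound. The main technical obstacle is this per-swap estimate, because the mutual information involves the non-linear functional $\mathbb{E}\log\mathbb{E}_{\mathbf{x}'}[\cdot]$: perturbing one observation must be traced through the log-partition function $\log\mathbb{E}_{\mathbf{x}'}\exp(\cdots)$ over the random Gibbs posterior. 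Controlling this requires uniform-in-$n$ bounds on the posterior fluctuations of $\sum_{i\le j} (x'_i x'_j/\sqrt n)\, S(Y_{ij})$, which rest on boundedness of the prior together with the assumed integrability of $|S|^3$. An alternative route is a Guerra-style interpolation directly between $P_{\rm out}$ and the Gaussian channel, reducing the task to a uniform $O(n^{-1/2})$ bound on the derivative of the interpolated mutual information; this sidesteps the combinatorics of the telescoping but requires essentially the same fluctuation estimates on the associated Gibbs measure.
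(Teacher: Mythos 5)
Your overall strategy is the same as the paper's: Taylor-expand $\log P_{\rm out}(Y_{ij}|w)$ to second order around $w=0$, absorb the cubic remainder using $W_{ij}=O(n^{-1/2})$, match the first two null moments of the score with those of the Gaussian surrogate, and then run a Lindeberg swap through the log-partition function. The paper implements exactly this, applying Chatterjee's generalized Lindeberg principle to $\Phi(\mtx{S})=\log\int P(\mtx{W}')e^{\Iprod{\mtx{W}'}{\mtx{S}}-\|\mtx{W}'\|_F^2/(2\Delta)}\,\diff\mtx{W}'$ with the deterministic derivative bounds $\partial\Phi/\partial S_{ij}=[W_{ij}]=O(n^{-1/2})$, $\partial^2\Phi/\partial S_{ij}^2=O(n^{-1})$, $\partial^3\Phi/\partial S_{ij}^3=O(n^{-3/2})$, which follow directly from the finite support of $p(x)$. (Incidentally, this means the ``uniform posterior fluctuation bounds'' you worry about at the end are not needed: the relevant derivatives are bounded deterministically, not just with high probability.)

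There is, however, one genuine gap. You assert that because the Gaussian channel's quadratic coefficient $-1/\Delta$ matches $\EE_{P_{\rm out}(y|0)}[T(Y)]$, ``the two channels are indistinguishable through their first two moments'' at the level of the truncated expansion. But $T(Y_{ij})$ is a \emph{random} coefficient with $O(1)$ fluctuations about $-1/\Delta$, and inside the log-partition function it multiplies the dummy variable $W_{ij}'^2$, not the true $W_{ij}^2$; the crude bound $\sum_{i\le j}|T(Y_{ij})+1/\Delta|\,W_{ij}'^2=O(n)$ is far too large to absorb into $O(\sqrt n)$. The paper closes this hole by writing $D_{ij}=S_{ij}'+1/\Delta$, bounding $|\Iprod{\mtx{D}}{\mtx{W}'\circ\mtx{W}'}|\le\|\mtx{D}\|\,\|\vct{x}'\|_4^4/n=O(\|\mtx{D}\|)$, and invoking Lata\l a's theorem for the operator norm of a random matrix with i.i.d.\ entries of bounded fourth moment to get $\EE\|\mtx{D}\|=O(\sqrt n)$ — a cancellation across entries that a termwise bound cannot see. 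Your framework could alternatively handle this by including the pair $(S_{ij},T_{ij})$ in the swap and using that the derivative of $\Phi$ with respect to the quadratic coefficient is $\tfrac12[W_{ij}'^2]=O(n^{-1})$ while $\EE_{Y_{ij}|W_{ij}}[D_{ij}]=O(n^{-1/2})$, but as written the proposal does not supply either argument, and this is precisely the step where the naive estimate fails.
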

Informally, this means that we only have to compute the mutual information for a  Gaussian channel to
take care of a wide range of channels. 

Our next result is that, the Bethe mutual
information is always an upper bound to the true one for any finite $n$:
\begin{thm}[Upper Bound] \label{th:Upperbound}
  Assume model (\ref{modelxx}) with a prior $p(x)$ having finite support, 
  and  a Gaussian channel 
  such that $P_{\rm out} (y|w)$ is the probability density function of
  a centered Gaussian distribution with variance $\Delta$. 
  %a differentiable output
  %function $P_{\rm out}(y|w)$ with bounded second derivatives at
  %$w\!=\!0$. 
  Then for all non-negative parameter $m$, the 
  mutual information per variable $I(\vct{x};\mtx{Y})/n$ is {\it upper
  bounded} 
  by the Bethe mutual information $i_{\rm B}(m)$ defined by
\be i_{\rm B}(m)\!=\!\frac {m^2 \!+\!  \[[\EE_{x}(x^2)\]]^2}{4\Delta}
\!-\EE_{x,z}
\!\[[ {\cal J}\!\!\(( \frac{m}{\Delta} , \frac{m x}{\Delta}
\!+\!\sqrt{\frac{m}{\Delta}} z\)) \!\]],
  \label{eq:irs}
  \ee
where $\EE_x$ denotes the expectation taken over the prior distribution $p(x)$, 
$z$ is a Gaussian variable following ${\cal N}(0,1)$, and where
  \be
  {\cal J}(A,B) = \log \int e^{Bx - Ax^2/2} p (x) \diff x
  \, \label{eq:J} .
  \ee 
\end{thm}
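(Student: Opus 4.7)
The plan is to adapt the Guerra--Korada--Macris interpolation method \cite{guerra2003broken,korada2009exact} to the rank-one symmetric setting. I would embed the original inference problem into a one-parameter family indexed by $t \in [0,1]$: at $t=1$ one recovers the original rank-one model, while at $t=0$ one has $n$ decoupled scalar Gaussian channels whose free energy is precisely the $\calJ$-term of $i_{\rm B}(m)$. Monotonicity of the associated quenched log-partition function in $t$, which follows from the Nishimori identity of the Bayes-optimal setting together with a completion-of-squares trick, then delivers the desired inequality $I(\vct{x};\mtx{Y})/n \leq i_{\rm B}(m)$.

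Concretely, for fixed $m \geq 0$ I would introduce the augmented observations $Y_{ij}(t) = \sqrt{t/n}\,x_i x_j + \sqrt{\Delta}\,\xi_{ij}$ for $i \leq j$ together with scalar side information $Y'_i(t) = \sqrt{(1-t)m/\Delta}\,x_i + z_i$, where $\xi_{ij}, z_i \iiddistr \calN(0,1)$ are independent of $\vct{x}$. Let $\psi_n(t) = \frac{1}{n}\EE\log Z_n(t)$ be the averaged log-partition function of the posterior of $\vct{X}$ given these observations. A direct identification shows $\psi_n(1) = -I(\vct{x};\mtx{Y})/n$ plus an explicit Gaussian-entropy constant, and $\psi_n(0) = \EE_{x,z}[\calJ(m/\Delta,\, mx/\Delta + \sqrt{m/\Delta}\,z)]$ plus another explicit constant. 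Differentiating $\psi_n(t)$ in $t$ and using Gaussian integration by parts on both $\xi_{ij}$ and $z_i$ yields, up to a diagonal $O(1/n)$ correction coming from $i=j$ terms,
\eq{
\psi_n'(t) = \frac{1}{4\Delta}\,\EE\langle q_{12}^2 \rangle_t - \frac{m}{2\Delta}\,\EE\langle q_{12}\rangle_t,
}
where $q_{12} = \frac{1}{n}\sum_i X_i^{(1)} X_i^{(2)}$ is the overlap between two i.i.d.\ replicas $\vct{X}^{(1)}, \vct{X}^{(2)}$ drawn from the posterior and $\langle\cdot\rangle_t$ is the associated Gibbs average.

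The Nishimori identity, available because the problem is Bayes-optimal, ensures that joint moments of $(q_{12}, q_{1\ast})$ with the planted overlap $q_{1\ast} = \frac{1}{n}\sum_i X_i^{(1)} x_i$ behave symmetrically; in particular it licenses completing the square as
\eq{
\psi_n'(t) = \frac{1}{4\Delta}\,\EE\langle (q_{12}-m)^2 \rangle_t - \frac{m^2}{4\Delta} + O(1/n) \;\geq\; -\frac{m^2}{4\Delta} + O(1/n).
}
Integrating over $t \in [0,1]$ gives $\psi_n(1) - \psi_n(0) \geq -m^2/(4\Delta) + O(1/n)$, and translating back into mutual-information language the explicit Gaussian-entropy constants reorganize into the prefactor $(m^2 + [\EE x^2]^2)/(4\Delta)$ appearing in $i_{\rm B}(m)$, producing $I(\vct{x};\mtx{Y})/n \leq i_{\rm B}(m) + O(1/n)$ for every $m \geq 0$.

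The main obstacle is the careful bookkeeping that reproduces the exact constant $\tfrac{m^2 + [\EE x^2]^2}{4\Delta}$: the matrix channel contributes an $O(\EE[x^4]/\Delta)/n$ diagonal term from the $i=j$ part, Stein's lemma must be applied symmetrically taking $\xi_{ji} = \xi_{ij}$ into account, and the conversion between $\psi_n$ and $I/n$ involves further $\tfrac{1}{2}\log(2\pi\Delta)$-type constants. All of these are routine but easily mis-signed; the single conceptual ingredient is the Nishimori identity, which is a direct consequence of the Bayes-optimal (planted) structure of the model and is used precisely to make the bare quadratic form in $q_{12}$ a perfect square around $m$.
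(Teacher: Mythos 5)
Your proposal is correct and follows essentially the same route as the paper: the same Guerra--Korada--Macris interpolation between the scalar denoising channel (whose free energy gives the $\calJ$-term) and the matrix channel, with Gaussian integration by parts and the Nishimori identity used to complete the square in the overlap and bound the $t$-derivative below by $-m^2/(4\Delta)$. The only (harmless) difference is that you carry an $O(1/n)$ error from the diagonal $i=j$ terms, whereas the paper observes these have a favorable sign and simply drops them, yielding the bound exactly for every finite $n$.
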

Note that the mutual information is related to the free energy only by
a simple term eq.~(\ref{def:fnrg}).

Our last main result yields an asymptotic lower bound:
\begin{thm}[Lower Bound]   \label{th:Lowerbound}
  With the same hypothesis as in Theorem \ref{th:Upperbound}, denote
  $\hat m$ the minimizer of eq.~(\ref{eq:irs}). 
  Define
    \be
  i_{\rm L}(m) \!=\! \frac{2m^2\!-\!{\hat m}^2\!+\! \[[ \EE_{x} (x^2)\]]^2}{4\Delta} -
\EE_{x,z}
\!\[[ {\cal  J}\!\!\(( \frac{\hat m}{\Delta} , \frac{m x}{\Delta}
\!+\!\sqrt{\frac{\hat m}{\Delta}} z\)) \!\]]\!,
  \label{eq:iL}
  \ee
  where $z \sim {\cal N}(0,1)$. 
Assume that $i'_{\rm L} (m)=0$ has a finite number of solutions. 
  Then 
  $$
  \liminf_{n\to \infty} \frac{1}{n} I (\vct{x}; \mtx{Y} ) \ge \min_m i_{\rm L} (m).
  $$
 % the asymptotic
 % mutual information per variable is lower bounded by
 % $\text{\rm min}_m i_{L}(m)$. 
  % 

\end{thm}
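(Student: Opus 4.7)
My plan is to prove the lower bound via a variant of the Guerra--Korada--Macris interpolation, designed so that its decoupled endpoint realizes $i_L(m)$ rather than $i_B(m)$. By Theorem~\ref{th:universlity} it suffices to treat the Gaussian channel of variance $\Delta$, for which $I(\vct{x};\mtx{Y})/n$ equals the free energy of an explicit Bayesian posterior up to a known additive constant. I will introduce a one-parameter interpolating posterior with free energy $\phi_n(t)$, $t\in[0,1]$, such that at $t=1$ only the original rank-one matrix observation $\mtx{Y}$ is present, giving $\phi_n(1) = -(1/n)I(\vct{x};\mtx{Y}) + c_n$, while at $t=0$ the model decouples across coordinates into independent scalar inference problems. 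The key novelty compared with the interpolation used for Theorem~\ref{th:Upperbound} is that the scalar channel is deliberately \emph{mismatched}: the true signal scale uses $m$, but the posterior acts as if it were $\hat m$. A direct computation of the factorized partition function shows that this design produces per-site log-partition $\mathcal{J}(\hat m/\Delta,\,m x_i/\Delta + \sqrt{\hat m/\Delta}\,z_i)$, and the attendant deterministic quadratic normalization produces exactly the $(2m^2-\hat m^2)/(4\Delta)$ coefficient appearing in $i_L$, so that $\phi_n(0) = -i_L(m) + c_n$ with the same additive constant as at $t=1$.

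Next I differentiate $\phi_n(t)$ in $t$. Applying Gaussian integration by parts to both the matrix noise and the scalar noise, and invoking the Nishimori identity on the Bayes-optimal matrix part of the interpolation, one obtains, after cancellations, an expansion of the form
\begin{equation*}
\phi_n'(t) \;=\; -\frac{1}{4\Delta}\,\mathbb{E}\bigl\langle (Q-m)^2\bigr\rangle \;+\; \mathcal{R}(m,\hat m) \;+\; o(1),
\end{equation*}
where $Q = n^{-1}\sum_i x_i x_i'$ is the overlap between the planted signal and an independent replica drawn from the interpolating posterior, and $\mathcal{R}(m,\hat m)$ is the explicit quadratic remainder produced by the mismatch. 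Discarding the nonnegative overlap square and integrating in $t$ from $0$ to $1$ yields, for every admissible $m$,
\begin{equation*}
\tfrac{1}{n}\,I(\vct{x};\mtx{Y}) \;\geq\; i_L(m) \;-\; \mathcal{R}(m,\hat m) \;+\; o(1).
\end{equation*}

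Finally I optimize over $m$. The assumption that $i'_L(m)=0$ has only finitely many solutions is exactly what lets me evaluate the bound at a global minimizer $m^\star$ of $i_L$: by construction the remainder $\mathcal{R}(m,\hat m)$ is arranged to vanish on the critical set of $i_L$, so selecting $m^\star$ gives $\liminf_{n\to\infty}(1/n)\,I(\vct{x};\mtx{Y}) \geq i_L(m^\star) = \min_m i_L(m)$. The hardest step will be the first one: identifying the correct mismatched scalar channel so that simultaneously (i) its decoupled endpoint reproduces $-i_L(m)$, (ii) Gaussian integration by parts on top of the Nishimori identity for the matrix part produces a derivative whose overlap piece is a perfect square with a definite sign, and (iii) the residual $\mathcal{R}$ vanishes at stationary points of $i_L$. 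Once this interpolation is in place, the derivative computation and the sign-inequality step are routine Gaussian manipulations, and the finiteness hypothesis is only needed to control the concluding optimization over $m$.
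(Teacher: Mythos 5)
Your plan captures the right flavor (interpolation, a deliberately modified endpoint, Gaussian integration by parts, discard a square), but it misses the one construction that makes the lower bound work, and the structure you describe runs into a sign obstruction. If the matrix observation fades in with $t$ (as in your sketch, where ``at $t=1$ only the original rank-one observation is present''), then the $t$-derivative of the free energy necessarily contains the term $+\frac{1}{2\Delta n^2}\sum_{i\le j}\EE\qth{x_ix_jx_i^\ast x_j^\ast}$, i.e.\ the planted-overlap square appears with a \emph{plus} sign, exactly as in the proof of Theorem~\ref{th:Upperbound}. That sign is what makes the interpolation give an \emph{upper} bound on $I/n$; to get a lower bound you would have to upper-bound $\EE[(Q-m)^2]$, i.e.\ prove overlap concentration, which is the hard step your mismatched scalar channel does not address. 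Your claimed derivative $-\frac{1}{4\Delta}\EE\langle(Q-m)^2\rangle+\calR$ with $Q$ the \emph{planted} overlap and a negative sign is therefore not achievable by this design, and invoking the Nishimori identity is illegitimate here anyway: once the scalar channel is mismatched the interpolating measure is no longer the posterior of any inference problem, so Lemma~\ref{lmm:nishimori} does not apply. The remainder $\calR$ that is supposed to ``vanish on the critical set of $i_{\rm L}$'' is never constructed, and that is precisely where the proof lives.

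The paper's actual route is different in a way that resolves both issues. The $t=0$ endpoint is not a decoupled channel but an \emph{ad-hoc, non-Bayesian} model whose Hamiltonian keeps the full planted coupling $\frac{1}{2n\Delta}\Iprod{\vct{x}}{\vct{x^\ast}}^2$ at every $t$, together with a scalar term at scale $\hat m$. Because the planted coupling carries no $t$-dependence, it drops out of $\partial_t\calH$, and after integration by parts the derivative reduces to $-\frac{1}{4\Delta}\EE[(\Iprod{\vct{x}}{\vct{x'}}/n-\hat m)^2]+\frac{\hat m^2}{4\Delta}$ with $\vct{x'}$ an independent \emph{replica} (not the planted signal); the square now has the right sign to discard. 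The price is that the $t=0$ free energy is no longer a product of scalar integrals: one linearizes $\Iprod{\vct{x}}{\vct{x^\ast}}^2$ by a Gaussian (Hubbard--Stratonovich) identity, which introduces the integral over $m$, and then a rigorous Laplace argument (Appendix~\ref{sec:app_Laplace}) yields the $\min_m$ in the statement. This is also where the hypothesis that $i'_{\rm L}(m)=0$ has finitely many solutions is used --- to make the Laplace/union-bound step work --- not, as you suggest, ``to control the concluding optimization over $m$.'' Without the ad-hoc coupling and the Laplace step, your outline cannot be completed as written.
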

One can verify that $\hat m$ is always a stationary point of
$i_{\rm L}(m)$.  If additionally $\tilde m\!=\!\hat m$, where
$\tilde m\!\equiv\!\text{\rm argmin}~ i_{\rm L}(m)$, then
$ \min_m i_{\rm L} (m)\!=\! i_{\rm B} (\hat m)$ and the Bethe mutual
information asymptotically equals the true one. As we shall see this
is the case for some range of parameters, but not always.

\subsection{Relation to previous works}
For two particular cases of model (\ref{modelxx}),  the mutual
information has previously been proven rigorously \cite{6875223,deshpande2015asymptotic} using the
approximate message passing algorithm and its state evolution \cite{rangan2012iterative}. Remarkably, these were
constructive proofs, with an explicit algorithm that achieves the
minimum mean squared error (MMSE).  The proof technique of
\cite{6875223,deshpande2015asymptotic} does not extend
straightforwardly when the state evolution had more than one fixed
point. Our approach applies to more general class of problems (even when several
fixed points exist), but is not constructive and our lower bound is not always tight.

We rely on two essential contributions. First, we use the
cavity computations of Lesieur {\it et al.} who solved the
problem using statistical physics methods~\cite{lesieur2015mmse}. Our
results are a considerable step towards confirming the full validity of
this approach. We show that the Bethe mutual information always yields
an upper bound, and by our lower bound we confirm that the Bethe
mutual information is exact for a large range of parameters. 
Secondly, our approach is
inspired by the scheme introduced by Korada and
Macris \cite{korada2009exact} for studying the Sherrington-Kirkpatrick
model of spin glasses on the so-called Nishimori line with the Guerra
interpolation \cite{guerra2003broken}. It also crucially exploits the
Nishimori identities \cite{NishimoriBook01} for optimal Bayesian estimation
\cite{iba1999nishimori,korada2010tight,zdeborova2015statistical}. It
is worth to remark that
for problems of Bayes-optimal estimation (on the Nishimori line) the simplest version of the Guerra interpolation provides
an upper bound on the free energy/mutual information while for
standard statistical mechanics models, or for optimization problems, it provides
instead a lower bound \cite{guerra2003broken,franz2003replica}.

While we present only the rank-one version on model (\ref{modelxx}),
our proof is readily extendable to any finite rank, or even to tensor
factorization. Future directions include the extension to
non-symmetric matrices, which are less straightforward. We believe
that our results, once more, give strong credibility to the use of the
replica and the cavity methods for statistical estimation problems.
\section{Application to sparse Rademacher variables}
\begin{figure}[t]
\includegraphics[width=0.5\textwidth]{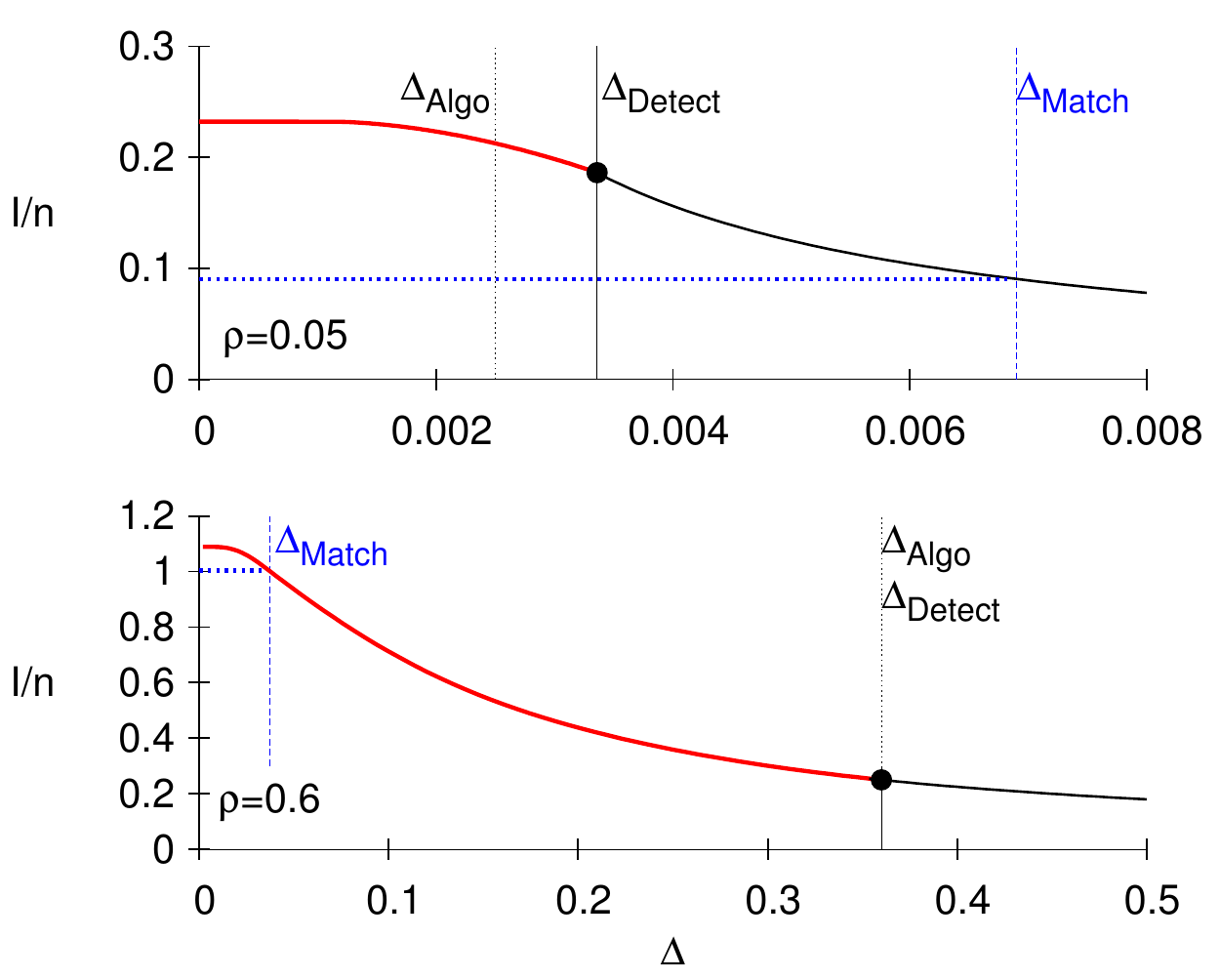}\vspace{-0.2cm}
\caption{Asymptotic mutual information $I/n$ per variable as a
  function of the effective channel noise $\Delta$ for sparse
  Rademacher variables with density 
  $\rho\!=\!0.05$ (top) and $\rho\!=\!0.6$ (bottom). Both bounds are
  tight and equal to the Bethe mutual information for
  $\Delta>\Delta_{\rm Match}(\rho)$. For $\Delta<\Delta_{\rm Match}$,
  however, the lower bound (shown in blue dashed line) is not tight
  and does not coincide with the (conjectured exact) upper bound (full
  line). The detectability transition in the upper bound arises at
  $\Delta_{\rm Detect}$, and is a (conjectured tight) lower bound on
  the true detectability transition. For large enough $\rho$, the
  bounds are tight at the phase transition and we further observe that
  $\Delta_{\rm Detect}=\Delta_{\rm Algo}$ (for instance here when
  $\rho=0.6$).  %For $\rho \lesssim 0.09$, however,
  However, we find that when $\rho \approx 0.09$,
  $\Delta_{\rm Detect}$ becomes strictly larger than
  $\Delta_{\rm Algo}$, thus revealing the existence of a regime where
  detection is informationally possible but computationally hard for known polynomial-time algorithms.}\label{figure}
\end{figure}
We shall illustrate (see Fig.~\ref{figure}), for concreteness, our findings on a specific
example of sparse Rademacher variables where
$x\!=\!0,\! +1, -1$ with probability $1\!-\!\rho,\rho/2,\rho/2$
respectively.

Solving numerically for (\ref{eq:irs}) and (\ref{eq:iL}) shows that
the region where the two bounds coincide, and the Bethe mutual
information is thus rigorously exact, is quite sizable. This happens,
for instance, for all $\rho$ if $\Delta$ is larger than a value around
$0.15$, and for all $\Delta$ if $\rho$ is larger than a value around
$0.66$. For large enough $\Delta$, we also find that
$\hat m\!=\!\tilde m\!=\!0$, and a derivative of the mutual information indicates that
it is impossible to find an assignment correlated to the truth.

In the non-sparse case ($\rho\!=\!1$), one can further show that the
minimizers of both (\ref{eq:irs}) and (\ref{eq:iL}) always coincide
and the Bethe mutual information is thus asymptotically exact for all
$\Delta\!>\!0$. A phase transition arises at $\Delta=1$ so that for
$\Delta\!<\!1$, a non trivial solution with
$\hat m\!=\!\tilde m\!>\!0$ appears. Among the problems that belong
into this category are the dense version of the binary stochastic block model
\cite{Massoulie13,Mossel13,deshpande2015asymptotic}, the dense version of the censored block model
\cite{abbe2014decoding,HajekWuXuSDP15,saade2015spectral}, or the
Sherrington-Kirkpatrick model on the Nishimori line as originally
studied by Korada and Macris \cite{korada2009exact}.

When both $\rho$ and $\Delta$ are small enough, numerically we find
that our upper and lower bounds stop to coincide. Define
$\Delta_{\rm match}(\rho)$ as the minimum $\Delta$ so that our upper
and lower bounds match, for a fixed $\rho$. This is illustrated in
Fig.~\ref{figure} with three different values of $\rho$.
When $\Delta\!<\!\Delta_{\rm Algo}\!=\!\sqrt{\rho}$, polynomial-time
algorithms such as message passing
\cite{rangan2012iterative,deshpande2015asymptotic,lesieur2015phase,lesieur2015mmse}
or spectral methods \cite{baik2005phase} are known to be able to find
an assignment with a non-trivial correlation to the truth; thus, in
this region, the non-trivial detection is easy.  We define
$\Delta_{\rm Detect}$ to be the (conjectured) information-theoretic
(IT) threshold for the non-trivial detection.  Depending on the
particular values of $\rho$, we have the following two strikingly different
observations:
%Thus this is the ``easy'' region. 

%however, a region
%appears for $\Delta<\Delta_{\rm match}(\rho)$ where our two bounds
%stops to coincide. 
%Interestingly, even when $\Delta<\Delta_{\rm match} (\rho)$, our upper bound
%alone suffices to confirm a phase transition predicted in statistical physics. 

%\begin{itemize}

%\item  
(1) When $\rho$ is large (e.g. $\rho\!=\!0.6$),
$\Delta_{\rm Match} \!<\! \Delta_{\rm Algo}\!=\!\Delta_{\rm Detect}$.
In this case, our lower and upper bound coincide, showing that
there is a non-analyticity (phase transition) at
$\Delta_{\rm Detect}$. In this region, the conjectured IT threshold is
indeed the true one and coincides with the algorithmic one
$\Delta_{\rm Algo}$.

(2) When $\rho$ becomes smaller than a certain threshold $\rho^\ast$,
$\Delta_{\rm Algo}\!<\!\Delta_{\rm Detect}\!<\! \Delta_{\rm Match}$.
Numerically we find that $\rho^\ast \approx 0.09$.  The derivative of
$i_{\rm B} $ undergoes a phase transition %(discontinuity)
at $\Delta_{\rm Detect}$.  It readily implies that the derivative of
the true mutual information per variable must exhibit a phase
transition somewhere between $\Delta_{\rm Match}$ and
$\Delta_{\rm Detect}$, which is strictly above $\Delta_{\rm
  Algo}$.
Hence, in this case, there exists a region where the non-trivial
detection is informationally possible, but it is impossible via
standard polynomial-time algorithms like spectral methods or message passing.
%\end{itemize}

% $\rho \lesssim 0.09$, however, a non-analyticity (phase
%transition) appears in the upper bound at
%$\Delta_{\rm detect}>\Delta_{\rm algo}$. For
%$\Delta>\Delta_{\rm match}$ our matching bounds imply the value of the
%minimum mean squared error (MMSE) (by the derivative of the mutual
%information) is as high as random guess of the vector based on
%$p(x)$. The MMSE must therefore become smaller at some point above
%$\Delta_{\rm detect}$ (since that is the smallest $\Delta$ for which
%the derivative of the true mutual information can change). Since
%$\Delta_{\rm detect}>\Delta_{\rm algo}$ this implies the existence of
%a region, when $\rho \lesssim 0.09$ , but where the standard
%algorithms {\it do not} work. These situations are illustrated in
%Fig.~\ref{figure}.
%
\section{Channel universality}
Let us now show that in order to characterize the mutual information
per variable, it suffices to consider an equivalent Gaussian
channel. We give a detailed rigorous proof in Appendix
\ref{sec:app_channel} and present here only its main idea. We assume
that the prior $p(x)$ has a finite support and denote 
\bea
S_{ij} \equiv \partial_w \log  P_{\rm out}(Y_{ij}|w)|_{w=0} \, , \\
S_{ij}' \equiv \partial^2_w  \log P_{\rm out}(Y_{ij}|w)|_{w=0}\, .
\eea
We assume $ \EE_{Y_{ij}|0} [|S_{ij}|^3 ],$ $S'_{ij},$  and $\partial^3_w  \log P_{\rm out}(Y_{ij}|w)|_{w=0}$ are all bounded.

Note that $W_{ij}=O(1/\sqrt{n})$. 
Using Taylor's expansion of $\log   P_{\rm out}(y |w)|$ at $w=0$, 
for all $i\le j$ we can write
\be
P_{\rm out}(Y_{ij} |W_{ij} ) = P_{\rm out}(Y_{ij} |0) e^{ W_{ij} S_{ij}  +   \frac{1}{2} W^2_{ij}S_{ij}' + O(n^{-3/2}) }.
  \nonumber
\ee
Thus, 
\bea
P_{\rm out}( \mtx{Y} | \mtx{W} ) =  P_{\rm out}( \mtx{Y} |0) 
e^{ \sum_{i \le j}  ( W_{ij} S_{ij}    + \frac{1}{2}   W^2_{ij}S_{ij}' ) + O(\sqrt{n} ) } ,
%&=& \frac{e^{-\frac{(S_{ij}\Delta-X_i X_j^T/\sqrt{n})^2}{2\Delta}}}{\sqrt{2\pi\Delta}} \((1+O(n^{-3/2}) \))
 \label{eq:equiv}
\eea

Classical properties of the Fisher information give that 
 $\EE_{Y_{ij} | 0 } [S'_{ij} ] =  -\EE_{Y_{ij} | 0 } [S_{ij}^2] = -1/\Delta$. 
Using the fact that
$P_{\rm out}(y|w)$ is close to $P_{\rm out}(y|0)$, one can further
argue that $
\EE_{Y_{ij} |W_{ij} } [ S'_{ij}]  =-1/\Delta+ O(n^{-1/2}) .$ 
%and $\EE_{Y_{ij} |W_{ij} } \[[S_{ij}^2\]] =1/\Delta+ O(n^{-1/2}).$ 
By concentration inequalities, we expect that 
$$ 
 \sum_{i \le j} W^2_{ij}S_{ij}' \approx  \sum_{i \le j} W_{ij}^2 \EE_{Y_{ij} |W_{ij} } [ S'_{ij}] = \sum_{ i \le j} W_{ij}^2/ \Delta
%\Iprod{ \mtx{S'}  }{ \mtx{W} \circ \mtx{W} } \approx \Iprod{\EE_{ \mtx{Y}| \mtx{W} }[ \mtx{S' }] }{ \mtx{W} \circ \mtx{W} }  = -\| \mtx{W} \|_F^2/\Delta 
+O(\sqrt{n}).
$$
Thus
$$
P_{\rm out}(\mtx{Y} | \mtx{W} ) \approx  P_{\rm out}(\mtx{Y} |0) 
e^{ \sum_{i \le j}  ( W_{ij} S_{ij}    - \frac{1}{2\Delta}   W^2_{ij})  + O(\sqrt{n} )}  
$$
and consequently 
$$
P( \mtx{W} | \mtx{Y} ) \propto P(\mtx{W} ) e^{-\frac{1}{2\Delta} \sum_{i \le j} ( \Delta S_{ij} - W_{ij})^2  +O(\sqrt{n})}.
$$
%e^{  \frac{X_iX_j^T}{\sqrt{n}}  S_{ij}  -
%  \frac{(X_iX_j^T)^2}{2n\Delta} + O(n^{-3/2})}/Z_{\rm out} \nonumber
%\\
%&=& \frac{e^{-\frac{(S_{ij}\Delta-X_i X_j^T/\sqrt{n})^2}{2\Delta}}}{\sqrt{2\pi\Delta}} \((1+O(n^{-3/2}) \)) \label{eq:equiv}
%\eea

%
%where $\Delta=\EE_Y(S_{ij}^2)$ is the inverse Fisher information. 
Hence, we expect that
$$
I( \mtx{W} ;  \mtx{Y} ) = I( \mtx{W} ;  \mtx{W} + \sqrt{\Delta}  \; \mtx{\xi} )+ O(\sqrt{n}). 
$$
%In other words, when computing the mutual information per variable $I(X;Y)$ 
%one is summing
%$N^2$ such terms. The $O(n^{-3/2})$ correction in eq.~(\ref{eq:equiv}
%) thus imply that the mutual information per variable $I(X;Y)/N$ of
%the Gaussian channel defined by eq.~(\ref{eq:equiv}), where the
%observation $Y_{ij}$ are replaced by $S_{ij}\Delta$, will coincide
%with the original one up to order $O(n^{-1/2})$.
In other words,  the mutual information per variable $I(\vct{x};\mtx{Y} )/n$
is asymptotically equal to the mutual information per variable of a 
Gaussian channel with noise variance $\Delta$. 

\section{Proving the upper bound}
\label{sec:upperbound}
\subsection{Mutual information and free energy}
Using the channel universality, we only need to deal  with the Gaussian
output. The mutual information between the observation $\mtx{Y}$ and the
unknown vector $\vct{x} $ is defined using the entropy as $I(\vct{x};\mtx{Y}) = H(\mtx{Y})-H(\mtx{Y}|\vct{x})$.
%Using the Bayes theorem one obtains
%$H(\mtx{Y})=-\EE_{\mtx{Y}} \{{\log \EE_{\vct{x}} P(\mtx{Y}|\vct{x}) \}}$.
For the Gaussian noise, a
straightforward computation shows that the mutual information per
variable is expressed as
\be
\frac{I(\vct{x};\mtx{Y} )}n = f + \frac{[\EE_{x} (x^2)]^2}{4\Delta}\, ,
\label{def:fnrg}
\ee
where $f=-\EE_{\mtx{Y}} \left[ \log Z(\mtx{Y}) \right]/n$ is the average free energy
per variable using the standard statistical physics terms, and
$Z(\mtx{Y})$ is the partition function defined by
\be Z(\mtx{Y}) \equiv \int {\rm d} \mtx{x} \,  p(\mtx{x}) \exp \left[ \sum_{i \le j} \left( - \frac{x^2_i
    x^2_j}{2n\Delta} + \frac{x_i x_j Y_{ij}} {\sqrt{n} \Delta} \right)
\right]\, . \ee
%p
We now show how to upper bound the free energy $f$.
% I removed the index $1$ because we would have to put it in the Z as
% well ... and I find it messy. 
%

\subsection{Denoising}
\label{sec:denoising}
We first solve a simpler denoising problem. Assume we
observe a noisy version of a vector $\vct{x^\ast}$ that we denote $\vct{y}$:
\bea
\vct{y} &=& \vct{x^\ast} + \sigma \vct{z}, 
\label{eq:denoising}
\eea
where $x^\ast_i \iiddistr p(x)$ and $z_i \iiddistr {\cal N}(0,1)$.  
The corresponding posterior distribution
reads
\be
P(\vct{x} |\vct{y} )  = \frac{1}{Z_0} p( \vct{x} ) \exp \left( -\frac{\| \vct{x} \|_2^2}{2 \sigma^2 } + \frac{ \Iprod{ \vct{x}}{ \vct{y} } }{\sigma^2}  \right)\, .
\ee
For future convenience we denote the variance
$\sigma^2 \equiv \frac \Delta{m}$, where $m$ and $\Delta$ are some
so-far unspecified parameters.
For this denoising problem, the averaged free energy per variables reads
\be
-nf_0 =\EE[ \log Z_0] = \EE_{ x^\ast, z} \[[ {\cal J} \(( \frac{ m}{\Delta}
,   \frac{ m x^\ast}{\Delta} +\sqrt{\frac{ m}{\Delta}} z\)) \]]
\label{eq:frng_den},
\ee
where $x^\ast \sim p(x)$, $z \sim \calN(0,1)$, and ${\cal J}(A,B)$ is the function defined in
eq.~(\ref{eq:J}). Notice how this yields a formula very close to the
one in Theorem~\ref{th:Lowerbound}.

\subsection{The interpolation method}
\label{sec:interpolation}
We now use the Guerra interpolation method, setting an artificial
parameter $t$, where we interpolate between the denoising problem at
$t=0$ and the desired matrix factorization one at $t=1$. To do so, 
assume that we have access to two types of noisy observations: (1) A noisy version
of $\vct{x^\ast}$, as in eq.~(\ref{eq:denoising}), with
now $\sigma^2\!=\!  \frac{\Delta}{m\!(1-t)}$; and (2) a noisy version of
$W_{ij}=x^\ast_i x^\ast_j/\sqrt{n}$ with a Gaussian noise of variance
$\Delta/t$. The posterior distribution, in this case, is given by 
\bea
&&P_t(  \vct{x} | \mtx{Y}  , \mtx{y} ) = \nonumber  \\
&& \frac 1{Z_t} p(\mtx{x}) e^{ t \sum_{i \le j} \left[ -\frac{x^2_i x^2_j}{2 n \Delta} + \frac{x_i x_j
    Y_{ij}}{\sqrt{n}\Delta}  \right] + (1-t)\[[-\frac{m \|\vct{x}\|_2^2}{2\Delta} + \frac{ m \Iprod{ \vct{x}}{ \vct{y} } } {\Delta}\]] }. \nonumber 
\eea
This model interpolates between the denoising problem at
$t=0$ and the one of the matrix factorization problem
at $t=1$. Using the fundamental theorem of algebra,
we write
\be
-n f_1 = \EE_{\mtx{x^\ast},\vct{z}, \mtx{\xi }}\[[ \log{Z_1} \]] =
f_0 - \int_0^1 dt \frac{d}{dt} \EE_{\mtx{x^\ast},\vct{z}, \mtx{\xi }}\[[
\log{Z_t} \]].
\label{eq:guerra}
\ee
The free energy at $t\!=\!0$ is precisely given by
(\ref{eq:frng_den}). Using now eq.~(\ref{def:fnrg}) and
eq.~(\ref{eq:irs}) we write
\be
\frac{I( \vct{x}; \mtx{Y} )}n\!=\! i_{\rm B}(m) - \frac {m^2}{4\Delta} -
\frac 1n\!\int_0^1\! dt \frac d{dt} \EE_{\mtx{x^\ast},\vct{z}, \mtx{\xi }} \[[\log  Z_t \]]\!\! .
\ee
%
%To prove theorem \ref{th:Upperbound}, we need the following lemma:
Theorem \ref{th:Upperbound} follows from the following lemma:

\begin{lemma} \label{lemma:pos}
For all positive $n$ and $ t \in [0,1]$, we have
\be
 \frac 1n \frac d{dt} \EE_{\mtx{x^\ast}, \vct{z},  \mtx{\xi }} \[[\log  Z_t \]] \ge  - \frac {m^2}{4\Delta} .
\ee
\end{lemma}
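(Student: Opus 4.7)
The plan is to compute $\frac{d}{dt}\EE[\log Z_t]$ in closed form and recognize the result as $-m^2/(4\Delta)$ plus a manifestly non-negative Gibbs expectation. First, I would substitute $Y_{ij} = x^*_i x^*_j/\sqrt{n} + \sqrt{\Delta/t}\,\xi_{ij}$ and $y_i = x^*_i + \sqrt{\Delta/(m(1-t))}\,z_i$ into the interpolating Hamiltonian $H_t(\vct{x})$ in the exponent of $Z_t$. This renders $\xi_{ij}$ and $z_i$ independent of $t$, so all the $t$-dependence is carried by explicit factors $t,\sqrt{t},(1-t),\sqrt{1-t}$. Using $\frac{d}{dt}\EE[\log Z_t] = \EE[\langle \partial_t H_t \rangle_t]$, the derivative $\partial_t H_t$ decomposes into (a) matrix-type terms $\sum_{i\le j}[-\frac{x_i^2x_j^2}{2n\Delta}+\frac{x_ix_jx^*_ix^*_j}{n\Delta}]$, (b) a $\xi$-term $\sum_{i\le j}\frac{x_ix_j\xi_{ij}}{2\sqrt{tn\Delta}}$, and analogous denoising pieces $\sum_i[\frac{m x_i^2}{2\Delta}-\frac{m x_i x^*_i}{\Delta}-\frac{m x_i z_i}{2\sqrt{(1-t)m\Delta}}]$.

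The next step is Gaussian integration by parts on the $\xi_{ij}$ and $z_i$ pieces. Since $\partial_{\xi_{ij}}H_t = \sqrt{t/(n\Delta)}\,x_ix_j$ and $\partial_{z_i}H_t = \sqrt{(1-t)m/\Delta}\,x_i$, Stein's lemma converts them into the Gibbs variances $\frac{1}{2n\Delta}\EE[\langle x_i^2x_j^2\rangle_t - \langle x_ix_j\rangle_t^2]$ and $-\frac{m}{2\Delta}\EE[\langle x_i^2\rangle_t - \langle x_i\rangle_t^2]$ respectively. The bare quartic $\langle x_i^2x_j^2\rangle_t$ from (a) exactly cancels its IBP counterpart from (b), and the $\langle x_i^2\rangle_t$ from (c) cancels likewise. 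What remains involves only $\langle x_ix_j\rangle_t^2$, $\langle x_ix_jx^*_ix^*_j\rangle_t$, $\langle x_i\rangle_t^2$ and $\langle x_i\rangle_t x^*_i$. I then invoke the Nishimori identity---legitimate because $P_t$ is the true Bayesian posterior at every $t$, so in expectation $\vct{x^*}$ is exchangeable with an independent posterior sample---to get $\EE[\langle x_ix_jx^*_ix^*_j\rangle_t] = \EE[\langle x_ix_j\rangle_t^2]$ and $\EE[\langle x_i\rangle_t x^*_i] = \EE[\langle x_i\rangle_t^2]$.

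Introducing the overlap $q = \frac{1}{n}\sum_i x_i^{(1)}x_i^{(2)}$ between two independent posterior replicas, the sum over $(i\le j)$ produces $\frac{n^2}{2}\EE[\langle q^2\rangle_t]$ from the off-diagonal part plus a diagonal $\frac{1}{2}\sum_i\EE[\langle x_i^2\rangle_t^2]$, while the denoising contribution is $-\frac{mn}{2\Delta}\EE[\langle q\rangle_t]$. Collecting and dividing by $n$ gives
\begin{equation*}
\frac{1}{n}\EE\!\left[\frac{d}{dt}\log Z_t\right] = \frac{1}{4\Delta}\EE[\langle q^2\rangle_t] - \frac{m}{2\Delta}\EE[\langle q\rangle_t] + \frac{1}{4n^2\Delta}\sum_i\EE[\langle x_i^2\rangle_t^2].
\end{equation*}
The last term is non-negative, and completing the square rewrites the first two as $\frac{1}{4\Delta}\EE[\langle(q-m)^2\rangle_t] - \frac{m^2}{4\Delta}\ge -\frac{m^2}{4\Delta}$, proving the lemma.

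The main obstacle I foresee is purely bookkeeping: one must track every factor of $\sqrt{t}$, $\sqrt{1-t}$, $1/\sqrt{n}$, and $1/\sqrt{\Delta}$ so that the unwanted quartics and quadratics cancel exactly, leaving the single square $(q-m)^2$ responsible for positivity. Finite support of $p(x)$ legitimizes every Gaussian IBP step (bounded observables and bounded partial derivatives of $H_t$ in the noise variables) and keeps the diagonal $i=j$ remainder $O(1/n)$ and non-negative, so no extra concentration argument is needed for the finite-$n$ statement.
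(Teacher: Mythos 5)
Your proposal is correct and follows essentially the same route as the paper: differentiate the interpolating Hamiltonian, apply Gaussian integration by parts on $\xi_{ij}$ and $z_i$ so the quartic and quadratic Gibbs terms cancel, invoke the Nishimori identities, and complete the square in the overlap. The only cosmetic difference is that you phrase the final square in terms of the two-replica overlap $q$ while the paper uses the overlap $m_t$ with the ground truth; these are interchangeable in expectation by the very Nishimori identity you invoke, and both treatments keep the non-negative diagonal remainder, so the arguments coincide.
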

%
%Once this is true for all $n$ then it is true is particular when
%$n\!\to\!\infty$ and theorem \ref{th:Lowerbound} follows.

\subsection{The proof}
\label{sec:proof}

Define 
\begin{align*}
& \calH_t (\vct{x}, \vct{x^\ast}, \mtx{\xi}, \vct{z})  = \sum_{i \le j} -\frac {t x_i^2 x_j^2}{2\Delta n} +
\frac{t x_i x_j x_i^\ast x_j^\ast }{\Delta n} + x_i x_j  \sqrt{ \frac{t}{n \Delta}}  \xi_{ij} 
    \\
    &~  - \frac{ (1-t) m \| \vct{x} \|_2^2}{2\Delta} + 
\frac{(1-t) m \Iprod{\vct{x}}{\vct{x^\ast}}  }{\Delta} + \Iprod{\vct{x}}{\vct{z}} \sqrt{\frac{m (1-t)}{\Delta }} .
\end{align*}
Then $P_t(\vct{x} | \vct{x^\ast}, \vct{z},\mtx{\xi} )  = p(\vct{x})e^{\calH_t}/Z_t$. 
Now we need to compute $\frac{d}{dt} \EE_{\vct{x^\ast}, \vct{z}, \mtx{\xi}} \left[   \log Z_t \right]$.
Notice that 
$$
 \partial_t \calH_t  =  \sum_{i\le j} {\cal A}_{ij} +\sum_i {\cal B}_{i}.
 $$
where
\begin{align}
{\cal A}_{ij} &=&  -\frac {x_i^2 x_j^2}{2\Delta n} +
\frac{x_i x_j x_i^\ast x_j^\ast}{\Delta n} + \frac {x_i x_j}{2\sqrt{n \Delta
    t}} \xi_{ij}\, ,  \label{eqA1}\\
%\end{align}
%and 
%\begin{align}
{\cal B}_{i}  &= &m \frac{x_i^2}{2\Delta} - m
\frac{x_ix_i^\ast}{\Delta} - \frac{x_iz_i }{2}  \sqrt{\frac{m}{\Delta (1-t)}} . 
\label{eqB1}
\end{align}
Since the prior $p(x)$ has a finite support $ |  \calA_{ij} |  $ and $ | \calB_i |$ are
dominated by functions integrable 
with respect to $P_t$.  
Thus by the dominated convergence theorem, 
$$
\frac{d}{dt} \log Z_t = \EE_t [ \partial_t \calH_t ] 
=  \sum_{i\le j}  \EE_t [ {\cal A}_{ij}]  +\sum_i  \EE_t [ {\cal B}_{i} ]. 
$$ 
%where 
%\begin{align}
%{\cal A}_{ij} = \EE_{t} \[[ -\frac {x_i^2 x_j^2}{2\Delta n} +
%\frac{x_i x_j x_i^0 x_j^0}{\Delta n} + \frac {x_i x_j}{2\sqrt{n \Delta
%    t}} \xi_{ij}  \]] \label{eqA1}
%\end{align}
% and 
%\begin{align}
%{\cal B}_{i}  = \EE_{t} \[[m \frac{x_i^2}{2\Delta} - m
%\frac{x_ix_0}{\Delta} - \frac{x_iz_i }{2}  \sqrt{\frac{m}{\Delta (1-t)}}  \]] . 
%\label{eqB1}
%\end{align}
Moreover, $ \EE_t [   \calA_{ij} ]  $ and $ \EE_t [\calB_i ]$ are
dominated by functions integrable 
with respect to the distribution of $\{ \vct{x^\ast}, \mtx{\xi}, \vct{z} \}$ .  
Thus again by the dominated convergence theorem, 
%we differentiate under the integral sign (this can be done for
%mild conditions on $p(x)$ {\bf CHECKS ???}, since, for all finite
%$n$, the existence of the integral in absolute values is guarantied). 
%Thus\vct{x^\ast}
%
\begin{align}
&\frac{d}{dt}\EE_{ \vct{x^\ast}, \vct{z },  \mtx{\xi } } \[[\log Z_t\]]   =  \EE_{\vct{x^\ast}, \vct{z }, \mtx{\xi }} \left[ \frac{d}{dt} \log Z_t \right] \nonumber \\
& = \EE_{ \vct{x^\ast}, \vct{z }, \mtx{\xi }} \[[ \sum_{i\le j}  \EE_t [ {\cal A}_{ij}]  +\sum_i  \EE_t [ {\cal B}_{i} ] \]] \label{eqderiv}. 
\end{align}
We then compute $ \EE_{\xi_{ij}}  \left[  \EE_t[ x_i x_j  ] \xi_{ij}  \right] $ and 
$\EE_{z_i} [ \EE_{t}[x_i ] z_i]$. We use the integration by part to get rid of
the $z_i$ and $\xi_{ij}$. In particular, for a standard Gaussian random variable $z$ and 
a  continuous differentiable function $f(a)$ such that $f(a) \eexp^{-a^2/2} \to 0$ as $a \to \infty$, 
we have that $\expect{ z f(z) } = \expect{ f'(z) }.$
%$\int {\cal D}z ~z f =\int {\cal D}z~f'$). 
Notice that $P_t$ is a function of $ \mtx{\xi }$ and $\vct{z}$. 
Also, 
$
\partial_{\xi_{ij}} \calH_t = \sqrt{ \frac{t}{n\Delta} } x_i x_j . 
$
Then $| \partial_{\xi_{ij}} \calH_t | $ is dominated by a function integrable under $P_t$. 
By the dominated convergence theorem, 
$$
\partial_{\xi_{ij}} Z_t =   \sqrt{ \frac{t} {n\Delta} } Z_t \;  \EE_t [x_i x_j ]. 
$$
It follows that 
$$
 \partial_{\xi_{ij}}   \EE_t[ x_i x_j  ]  = \sqrt{ \frac{t}{n\Delta} } \left( \EE_t [ x_i^2 x_j^2  ] - \left( \EE_t [ x_i x_j ] \right)^2 \right).
$$
Thus $ \partial_{\xi_{ij}}   \EE_t[ x_i x_j  ] $ is continuous in $\xi_{ij}$. 
Applying the integration by parts, it yields that 
\begin{align*}
 \EE_{\xi_{ij}}  \left[  \EE_t[ x_i x_j  ] \xi_{ij}  \right]  & = \EE_{\xi_{ij}} \left[  \partial_{\xi_{ij}}   \EE_t[ x_i x_j  ] \right] \\
 & = \sqrt{ \frac{t}{n\Delta} }  \EE_{\xi_{ij}} \left[   \EE_t [ x_i^2 x_j^2  ] - \left( \EE_t [ x_i x_j ] \right)^2 \right].
\end{align*}
Similarly, one can show that 
\begin{align*}
\EE_{z_i} [ \EE_{t}[x_i ] z_i] & = \EE_{z_i} \left[ \partial_{z_i} \EE_t [ x_i ] \right]  \\ 
&=\sqrt{\frac{m (1-t)}{\Delta }} \EE_{z_i} \left[ \EE_t [ x_i^2 ] - \left( \EE_t [x_i ] \right)^2 \right].
\end{align*}

%We then use Fubini's theorem to exange the order of the integrals over
%disorder and over $x$, and use the integration by part to get rid of
%the $z_i$ and $\xi$ (noticing that
%$\int {\cal D}z ~z f =\int {\cal D}z~f'$). After a short computation
%(and remembering that $Z_n^t$ is itself is a function of $z$ and
%$\xi$), 
%one can rewrite 
It follows from \eqref{eqA1} and \eqref{eqB1} that
\bea
&\EE_{t,\mtx{\xi}} [ {\cal A}_{ij} ] = \EE_{t,\mtx{\xi}}  \[[ 
\frac{x_i x_j x_i^\ast x_j^\ast}{\Delta n} - \frac {x_i x_j \EE_t[x_i x_j]}{2n \Delta
    }  \]] ,\label{eqA2}
\\
& \EE_{t,{\bf z}} [ {\cal B}_{i}  ] = \EE_{t,{\bf z}} \[[-
\frac{m}{\Delta} x_ix_i^\ast + \frac{m}{2\Delta} x_i\EE_t[x_i]
\]] .\label{eqB2}
\eea
Using the Nishimori identities given by Lemma \ref{lmm:nishimori}, we have that 
\bea
 \EE_{\mtx{x^\ast} ,{\bf
    z},  \mtx{\xi } } \[[ \EE_t\[[x_i x_j\]]^2
\]] \!=\! \EE_{\mtx{x^\ast} , \vct{z }, \mtx{\xi } }\[[\EE_t\!\[[x_i x_j x_i^ \ast x_j^\ast  \]]\]] ,\label{Nishi1}
\\
\EE_{\mtx{x^\ast} , \vct{z },  \mtx{\xi }  }\[[ (\EE_t\[[x_i\]])^2
\]] \!=\! \EE_{\mtx{x^\ast} , \vct{z }, \mtx{\xi }  }\[[ \EE_t \[[x_i x_i^\ast \]] \]].
\label{Nishi2}
\eea
Combining \eqref{eqderiv}, \eqref{eqA2}--\eqref{eqB2}, and \eqref{Nishi1}--\eqref{Nishi2} yields that
%
%\bea
%&\!\!\! \frac 1n\frac{d}{dt} \EE_{\vct{x^\ast}, \vct{z },{\bf \xi}} \[[\log Z_n^t\]] = \frac 1{2\Delta n^2}
%\sum_{i\le j} \EE_{\mtx{x}_0, \vct{z },\xi}\[[ \EE_t \[[ x_i x_j x_i^0
%x_j^0\]]\]] \nonumber \\
%&-\frac{m}{2\Delta n} \sum_i  \EE_{\mtx{x}_0, \vct{z },\xi}\[[ \EE_t \[[
%x_i x_i^0\]]\]] \ge \frac{(\tilde M^t)^2} {4\Delta}  - m \frac {\tilde M^t} {2\Delta}. \nonumber
%\eea
%%
%\nbr{JX: $x_0$ and $x^0$ mix up.}
%\nbr{JX:
\begin{align*}
& \frac 1n\frac{d}{dt} \EE_{ \vct{x^\ast}, \vct{z },  \mtx{\xi } } \[[\log Z_t\]]  \\
& =\frac 1{2\Delta n^2}
\sum_{i\le j} \EE \[[ x_i x_j x_i^\ast
x_j^\ast\]]  -\frac{m}{2\Delta n} \sum_i  \EE \[[
x_i x_i^\ast \]] \\
& \ge  \frac{\EE[m_t^2] }{4\Delta} - \frac{m \EE[ m_t ] }{2\Delta}  \\
& = \frac{1}{4\Delta} \EE[ (m_t - m)^2 ] - \frac{m^2}{4\Delta} \ge -\frac{m^2}{4\Delta},
\end{align*}
where $m_t =  \Iprod{ \vct{x}}{ \vct{x^\ast} }$ with $ \vct{x}$ drawn from $P_t$.  Lemma \ref{lemma:pos} 
readily follows. 
%}
%We are now in position to prove lemma \ref{lemma:pos}. Indeed
%\bea
%     \begin{aligned}x^\ast
%&\frac {m^2}{4\Delta} + \frac 1n \int_0^t dt \frac d{dt} \EE_{\mtx{x}_0, \vct{z },{\bf
%    \xi}} \[[\log  Z^{t}(\mtx{x}_0,{\bf    z},{\bf \xi})\]] \nonumber \\
%&\ge \frac 1{2\Delta n} \int_0^t dt m^2 + (\tilde M^t)^2 -2 m \tilde M^t \nonumber \\
%&=\frac 1{2\Delta n} \int_0^t dt \(( m-\tilde M^t\))^2 \ge 0 \, .\nonumber
%\end{aligned}
%\eea
%

\section{Proving the lower bound}
\label{sec:lowerbound}
The proof for the lower bound also relies on the 
interpolation method. 
%However, at all values of $t$,
%the interpolated model does not
%correspond to a Bayesian formulation, and thus
%the Nishimori's identities cannot be applied. 
Again, the proof idea is inspired by
\cite{korada2009exact}.
\subsection{An ad-hoc model}
We shall first compute the free energy of a totally artificial
model, that does not correspond to any Bayesian inference problem. Later, we
will interpolate the desired free energy starting from this model at
$t=0$.  
Let  $\hat m$ denote the minimizer of eq.~(\ref{eq:irs}).
For a fixed set of $(\vct{z},\mtx{x}^\ast)$, %
let
\bea
\tilde Z_0 = \int {\rm d}\mtx{x}\,  p (\mtx{x})
 e^{ \frac{1}{n \Delta}  \sum_{i \le j}   \[[ x_i x_j  x_i^\ast x_j^\ast \]]
 -\frac{  \hat m  \|\vct{x}\|_2^2}{2 \Delta} + \sqrt{\frac {\hat m}{\Delta } }    \Iprod{\vct{x} }{ \vct{z} } } \nonumber \\
= \int {\rm d}\mtx{x}\,  p (\mtx{x}) e^{\frac 1{2n\Delta}  \Iprod{ \vct{x}  }{ \vct{ x^\ast} }^2 -\frac{  \hat m  }{2 \Delta}  \|\vct{x}\|_2^2 + \sqrt{\frac {\hat m}{\Delta } }    \Iprod{\vct{x} }{ \vct{z} }  + O(1) }  \,, \nonumber
\eea 
where $O(1) = \frac{1}{2n\Delta} \| \vct{x} \|_2^2$. 

Using the Gaussian identity
$e^{\frac{b^2}{4a}}=\sqrt{a/\pi} \int\!  e^{-am^2+bm} \diff m$, with $a=n/{2\Delta}$ and
$b= \Iprod{ \vct{x}  }{ \vct{ x^\ast} } /\Delta,$ we reach
\be
\tilde Z_0\!\! \lesssim \!\!\sqrt{\frac{n}{2\pi \Delta}}\int\! \diff \mtx{x}
\, p (\mtx{x})  \int \diff m
e^{-\frac{n m^2}{2\Delta}
+\Iprod{\vct{ x} }{ \frac{m}{\Delta} \vct{ x^\ast}  + \sqrt{\frac {\hat m}{\Delta    }} \vct{z} }-\frac{\hat m}{2\Delta} \|\vct{x}\|_2^2  }, \nonumber
\ee
where $a_n \lesssim b_n$ means $a_n=O(b_n)$.  
We now invert the integral by Fubini's theorem so that
\bea
\tilde Z_0\!\! \lesssim \!\!\sqrt{\frac{n}{2\pi \Delta}}\int\! \diff m \exp\!{\[[\frac{-n
    m^2}{2\Delta}\!+\! \sum_i  {\cal J}\!\!\(( \frac{\hat m}{\Delta} ,
  \frac{ m x_i^\ast}{\Delta}
\!+\!\sqrt{\frac{\hat m}{\Delta}} z_i\)) \]]} .\nonumber
\eea
Then, a naive application of the Laplace method suggests that 
\begin{align}
\tilde f_0 & := -\limsup_{n \to \infty} \frac{1}{n} \EE_{\vct{z},{\mtx{x^\ast}}} \left[ \log
  { \tilde Z_0} \right]\\
& \ge \text{\rm min}_m\[[
\frac{m^2}{2\Delta} -
\EE_{x^\ast,z}
\!\[[ {\cal  J}\!\!\(( \frac{\hat m}{\Delta} , \frac{m x^\ast}{\Delta}
\!+\!\sqrt{\frac{\hat m}{\Delta}} z\)) \!\]]\]]\, .
\label{fnrg:adhoc}
\end{align}
A rigorous proof that $\tilde f_0$ is indeed lower bounded by the above
expression (which is only what we require) is presented in Appendix
\ref{sec:app_Laplace}, under the assumption that $i'_{\rm L} (m)=0$ has 
a finite number of solutions. 
\subsection{Interpolation reloaded}
The proof of the lower bound then proceed again via the interpolation
method, where we interpolate between the ad-hoc model and the matrix
factorization one by considering the following partition function, at
fixed value of $\{{ \vct{z}, \mtx{\xi}, \vct{x^\ast }\}}$:
 \bea \tilde Z_t &=& \int {\rm d}\mtx{x}\,  p (\mtx{x})
e^{\frac{1}{n\Delta}  \sum_{i \le j} 
  \[[- t\frac{x^2_i x^2_j}{2 } + x_i x_j x^i_0 x^j_0+ \sqrt{n\Delta t} x_i x_j 
    \xi_{ij} \]]}\nonumber \\
&&~~~\times e^{(1-t) \[[-\frac{  \hat m  \| \vct{x}\|_2^2}{2 \Delta} + 
  \((\sqrt{\frac {\hat m}{\Delta (1-t)}} \Iprod{\vct{x}}{\vct{z}}
  \)) \]]} 
\label{upp.eq}\, .
\eea 
This is again detailed in Appendix \ref{sec:app_intepolation}.

\section*{Acknowledgments}
We thank J. Barbier, T. Lesieur, N. Macris and C. Moore for helpful
discussions. Part of the research has received funding from the
European Research Council under the European Union’s 7th Framework
Programme (FP/2007-2013/ERC Grant Agreement 307087-SPARCS).
\ifCLASSOPTIONcaptionsoff \fi
\bibliographystyle{IEEEtran}
\bibliography{bibliography}

\newpage

\appendix

\subsection{Proof of channel universality}
\label{sec:app_channel}
Here we present the detailed proof of channel universality. 
In this proof, with a slight abuse of notation, we let $\Iprod{\mtx{A}}{\mtx{B}}
:= \sum_{i \le j} A_{ij} B_{ij}$ for two symmetric matrices $\mtx{A}$ and $\mtx{B}$. 
By
definition
$$
I( \mtx{W}; \mtx{Y} ) = \EE_{ \mtx{W}, \mtx{Y} } \left[ \log \frac{P_{\rm out} (\mtx{Y}|\mtx{W}) }{ \int P( \mtx{W'} ) P_{\rm out} (\mtx{Y}|\mtx{W'} ) \diff \mtx{W'} } \right].
$$
%By the assumption that $P(x)$ has finite support, it follows that 
%\be
%P_{\rm out}(\mtx{Y}|W) = P_{\rm out}(\mtx{Y}|0) e^{  \Iprod{W}{S}   - \frac{\| W\|_F^2 } {2 \Delta} + \frac{1}{2} \Iprod{ D} { W \circ W}  + O( \sqrt{n} ) }
%\nonumber
%\ee
Define  $D_{ij} = S'_{ij} + 1/\Delta$ and 
$\calH(\mtx{W}, \mtx{Y}) =   \Iprod{\mtx{W}}{ \mtx{S} }   - \frac{\| \mtx{W}\|_F^2 } {2 \Delta} + \frac{1}{2} \Iprod{ \mtx{D} } { \mtx{W} \circ \mtx{W}}$,
where $\circ$ denotes the element-wise matrix product.  
In view of \eqref{eq:equiv}, 
\begin{align*}
I(\mtx{W}; \mtx{Y})  = \EE_{\mtx{W}, \mtx{Y}} \left[ \log \frac{ e^{\calH (\mtx{W}, \mtx{Y}) }  }{ \int  P( \mtx{W'} ) e^{\calH ( \mtx{W'}, \mtx{Y}) } \diff \mtx{W'}  } \right] + O(\sqrt{n}) .
\end{align*}
In the following, we compute $ \EE_{\mtx{W}, \mtx{Y}} \left[\calH (\mtx{W}, \mtx{Y}) \right] $  and $ \EE_{\mtx{Y}} \left[ \log \left( \int  P( \mtx{W'} ) e^{\calH ( \mtx{W'}, \mtx{Y}) } \diff \mtx{W'}     \right) \right]$ up to additive errors on the order of $O(\sqrt{n}).$ 

\begin{lemma}
$$
 \EE_{\mtx{W}, \mtx{Y}} \left[\calH (\mtx{W}, \mtx{Y}) \right] =  \frac{n \left( \EE[ x^2] \right)^2  }{4\Delta} + O(\sqrt{n}). 
$$
\end{lemma}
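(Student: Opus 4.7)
The plan is to split
\[
\calH(\mtx{W},\mtx{Y}) \;=\; \Iprod{\mtx{W}}{\mtx{S}} \;-\; \frac{\|\mtx{W}\|_F^2}{2\Delta} \;+\; \tfrac{1}{2}\Iprod{\mtx{D}}{\mtx{W}\circ\mtx{W}}
\]
into its three summands and estimate each expectation separately, up to an additive $O(\sqrt{n})$. Two ingredients are used throughout: $p(x)$ has bounded support, so $|W_{ij}| \le M^2/\sqrt{n}$ uniformly for some constant $M$; and the hypotheses of Theorem~\ref{th:universlity} (bounded second and third derivatives of $\log P_{\rm out}$) yield Taylor remainders in the small parameter $W_{ij}$ that are uniform in $(i,j)$, which is what allows the $\Theta(n^2)$ error terms to aggregate to at most $O(\sqrt{n})$.

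For the first term, I would condition on $\mtx{W}$ and Taylor expand the score at $w=0$. Because $\EE_{Y|0}[S_{ij}] = 0$ (the score has mean zero) and its derivative in $w$ at $w=0$ equals the Fisher information $1/\Delta$, one obtains
\[
\EE_{Y_{ij}|W_{ij}}[S_{ij}] \;=\; \tfrac{W_{ij}}{\Delta} + O(W_{ij}^2),
\]
so $\EE[W_{ij} S_{ij}] = \EE[W_{ij}^2]/\Delta + O(\EE[|W_{ij}|^3]) = \EE[x_i^2 x_j^2]/(n\Delta) + O(n^{-3/2})$. Summing over the $\binom{n}{2}+n$ index pairs, using $\EE[x_i^2 x_j^2] = (\EE[x^2])^2$ for $i \ne j$ and noting that the $n$ diagonal terms contribute only $O(1)$ in total, yields $\EE[\Iprod{\mtx{W}}{\mtx{S}}] = \frac{n(\EE[x^2])^2}{2\Delta} + O(\sqrt{n})$.

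For the second term, a direct moment computation using independence of the $x_i$ gives
\[
\EE\!\left[\tfrac{1}{2\Delta}\|\mtx{W}\|_F^2\right] \;=\; \tfrac{1}{2\Delta}\sum_{i\le j}\EE[W_{ij}^2] \;=\; \tfrac{n(\EE[x^2])^2}{4\Delta} + O(1).
\]
For the third term, the classical Fisher identity $\EE_{Y|0}[S'_{ij}] = -\EE_{Y|0}[S_{ij}^2] = -1/\Delta$ implies $\EE_{Y|0}[D_{ij}] = 0$. By the same Taylor argument, $\EE_{Y_{ij}|W_{ij}}[D_{ij}] = O(W_{ij})$, whence $\EE[D_{ij} W_{ij}^2] = O(\EE[|W_{ij}|^3]) = O(n^{-3/2})$; summing over all $(i,j)$ gives $O(\sqrt{n})$.

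Combining the three contributions,
\[
\EE[\calH(\mtx{W},\mtx{Y})] \;=\; \tfrac{n(\EE[x^2])^2}{2\Delta} - \tfrac{n(\EE[x^2])^2}{4\Delta} + O(\sqrt{n}) \;=\; \tfrac{n(\EE[x^2])^2}{4\Delta} + O(\sqrt{n}),
\]
as claimed. The only real difficulty is the technical bookkeeping that ensures the Taylor remainders in steps one and three are uniform in $(i,j)$: there are $\Theta(n^2)$ terms, each with remainder only $O(n^{-3/2})$, so any nonuniformity would destroy the $O(\sqrt{n})$ target. This uniformity rests on the bounded support of $p(x)$ together with the bounded-derivative hypotheses on $\log P_{\rm out}$ imposed in Theorem~\ref{th:universlity}.
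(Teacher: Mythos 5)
Your proof is correct and follows essentially the same route as the paper: the same three-way split of $\calH$, the same use of $\EE_{Y|0}[S_{ij}]=0$, $\EE_{Y|0}[S_{ij}^2]=1/\Delta$, and $\EE_{Y|0}[D_{ij}]=0$ via Taylor expansion of the likelihood around $w=0$, and the same aggregation of $O(n^{-3/2})$ per-entry remainders over $\Theta(n^2)$ pairs. Your explicit remark about the diagonal entries and the uniformity of the remainders is slightly more careful bookkeeping than the paper's (which treats all entries as copies of the $(1,2)$ entry), but the argument is the same.
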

\begin{proof}
Notice that 
\begin{align*}
 & \EE_{\mtx{Y}|\mtx{W}} [  \Iprod{\mtx{W}}{ \mtx{S}  } ]  \\
 &= \frac{n(n+1)}{2} W_{12}  \EE_{Y_{12} |W_{12}}  [ S_{12}  ]  \\
& = \frac{n(n+1)}{2} W_{12}  \EE_{Y_{12} | 0}  [  S_{12} ( 1+ S_{12} W_{12} + O(n^{-1})  ] \\
& = \frac{n(n+1)}{2} W_{12}^2    \EE_{Y_{12} | 0} [S_{12}^2 ] + O (\sqrt{n}) \\
& =  \frac{n+1}{2\Delta} x_1^2x_2^2  + O (\sqrt{n}),
\end{align*}
where we used the fact that $\EE_{Y_{12} | 0}  [  S_{12} ] =0$ and $\EE_{Y_{12} | 0}  [  S^2_{12} ] =1/\Delta.$
It follows that 
$$
 \EE_{\mtx{W}, \mtx{Y}} [\Iprod{\mtx{W}}{ \mtx{S}  } ] = \frac{n+1}{2\Delta} \left( \EE[ x^2] \right)^2 +O(\sqrt{n}).
$$ 
Also, 
\begin{align*}
&\EE_{\mtx{Y}|\mtx{W}} \left[  \Iprod{ \mtx{D} } { \mtx{W} \circ \mtx{W}} \right] \\
& = \frac{n(n+1)}{2} W_{12}^2  \EE_{Y_{12} |W_{12}}  [ D_{12} ] \\
& = \frac{n(n+1)}{2}W_{12}^2  \EE_{Y_{12} |0}  [ D_{12} ( 1  + O(|S_{12}| n^{-1/2} ) ] \\
& = O(\sqrt{n}),
\end{align*}
where we used the fact that $\EE_{Y_{12}|0} [D_{12}] =0$, and $S'_{12}$ and $\EE_{Y_{12}|0} [|S_{12}| ]$ are bounded. 
%It follows that 
%$$ 
%\EE_{\mtx{W},\mtx{Y}} \left[  \Iprod{ \mtx{D} } { \mtx{W} \circ \mtx{W}} \right] = O(\sqrt{n})  \left( \EE[ x^2] \right)^2  = O(\sqrt{n}). 
%$$
The lemma readily follows by observing that 
$$
\EE \left[ \mtx{W}\|_F^2 \right] = \sum_{i \le j}  \EE[ x_i^2 x_j^2 ] = \frac{n+1}{2\Delta} \left( \EE[ x^2] \right)^2  
$$
\end{proof}

Define
$$
\Phi(\mtx{S} ) =  \log \left( \int P( \mtx{W'} ) e^{  \Iprod{\mtx{W'} }{S}   - \frac{\| \mtx{W'} \|_F^2 } {2 \Delta} } \diff \mtx{W'}   \right). 
$$
\begin{lemma}
$$
 \EE_{\mtx{Y}} \left[ \log \left( \int P( \mtx{W'} ) e^{\calH ( \mtx{W'}, \mtx{Y}) }  \diff \mtx{W'}    \right) \right] =  \EE_{\mtx{Y}} [ \Phi(\mtx{S}) ] + O(\sqrt{n}).
$$
\end{lemma}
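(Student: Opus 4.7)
The plan is to show that the extra quadratic term $R(\mtx{W'}) := \frac{1}{2}\Iprod{\mtx{D}}{\mtx{W'}\circ\mtx{W'}}$ that distinguishes $\calH$ from the exponent defining $\Phi(\mtx{S})$ contributes only $O(\sqrt{n})$ to the log-partition function in expectation. First I would factor this contribution out. Letting $\mu$ denote the probability measure on $\mtx{W'}$ with density proportional to $P(\mtx{W'})\exp\bigl(\Iprod{\mtx{W'}}{\mtx{S}} - \|\mtx{W'}\|_F^2/(2\Delta)\bigr)$, whose normalizer is $e^{\Phi(\mtx{S})}$, one has
\begin{equation*}
\log \int P(\mtx{W'}) e^{\calH(\mtx{W'},\mtx{Y})} \diff \mtx{W'} = \Phi(\mtx{S}) + \log \EE_{\mu}\!\left[e^{R(\mtx{W'})}\right],
\end{equation*}
so it suffices to show that $\EE_{\mtx{Y}}\!\left[\left|\log\EE_{\mu}[e^R]\right|\right] = O(\sqrt{n})$.

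Next, I would upgrade this to a uniform estimate on $\mathrm{supp}(\mu)$. Since $\log\EE_\mu[e^R] \in [\inf R,\,\sup R]$, it is enough to control $\sup_{\mathrm{supp}(\mu)}|R|$. On the support, $\mtx{W'} = \vct{x'}{\vct{x'}}^T/\sqrt{n}$ with each $x'_i$ in the finite support of $p$, say $|x'_i|\le C$. Setting $v_i = (x'_i)^2$, one recasts
\begin{equation*}
R(\mtx{W'}) = \frac{1}{2n}\sum_{i\le j} D_{ij}\, v_i v_j = \frac{1}{4n}\,\vct{v}^T \mtx{D}^{\sharp}\vct{v},
\end{equation*}
where $\mtx{D}^{\sharp}$ is the symmetric matrix with $D^{\sharp}_{ij}=D_{ij}$ off-diagonal and $D^{\sharp}_{ii}=2D_{ii}$. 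Since $\|\vct{v}\|_2^2 \le nC^4$, this quadratic form is bounded by $(C^4/4)\|\mtx{D}^{\sharp}\|_{\mathrm{op}}$ uniformly in $\vct{x'}$, giving the pointwise-in-$\mtx{Y}$ estimate $|\log\EE_{\mu}[e^R]| \le (C^4/4)\|\mtx{D}^{\sharp}\|_{\mathrm{op}}$.

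The remaining task is the random matrix bound $\EE_{\mtx{Y}}\|\mtx{D}^{\sharp}\|_{\mathrm{op}} = O(\sqrt{n})$. Conditional on $\mtx{W}$, the family $\{D_{ij}\}_{i\le j}$ is independent and uniformly bounded (since $S'$ is bounded by hypothesis and $1/\Delta$ is a constant), with conditional mean $\EE[D_{ij}\mid\mtx{W}] = \EE_{Y_{ij}|W_{ij}}[S' + 1/\Delta] = O(W_{ij}) = O(1/\sqrt{n})$, obtained from the same Taylor expansion of $P_{\rm out}(y|w)$ around $w=0$ used in the channel universality argument. Splitting $\mtx{D}^{\sharp} = \mtx{N} + \mtx{M}$ with $\mtx{M} := \EE[\mtx{D}^{\sharp}\mid\mtx{W}]$ and $\mtx{N}$ centered, a standard operator norm bound for symmetric matrices with independent bounded entries (e.g.\ matrix Bernstein or Bandeira--van Handel) yields $\EE[\|\mtx{N}\|_{\mathrm{op}}\mid\mtx{W}] = O(\sqrt{n})$, while $\|\mtx{M}\|_{\mathrm{op}} \le \|\mtx{M}\|_F = O(\sqrt{n})$ because each of its $O(n^2)$ entries has magnitude $O(1/\sqrt{n})$. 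Averaging over $\mtx{W}$ preserves the rate.

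The main obstacle is precisely the uniformity of the bound on $R$. A termwise concentration for $R$ at fixed $\vct{x'}$ shows that $R(\vct{x'})$ has mean $O(\sqrt{n})$ and only $O(1)$-scale fluctuations, but a union bound over the $|\mathrm{supp}(p)|^n$ possibilities for $\vct{x'}$ would destroy this rate. The quadratic-form representation combined with the $\sqrt{n}$ operator norm bound for $\mtx{D}^{\sharp}$ is what lifts the pointwise control to a uniform one at the right scale; this is where the boundedness of the derivatives of $\log P_{\rm out}(y|w)$ at $w=0$ is crucially used.
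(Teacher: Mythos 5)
Your proposal is correct and follows essentially the same route as the paper: bound the perturbation $\frac12\Iprod{\mtx{D}}{\mtx{W'}\circ\mtx{W'}}$ uniformly over the support of the prior by a quadratic form, control it by $O(\|\mtx{D}\|_{\mathrm{op}})$ times $\|\vct{x'}\|_4^4/n = O(1)$, and then show $\EE_{\mtx{Y}}\|\mtx{D}\|_{\mathrm{op}} = O(\sqrt{n})$ by centering and applying a sharp operator-norm bound for symmetric matrices with independent entries (the paper uses Lata{\l}a's theorem where you invoke Bandeira--van Handel; both avoid the spurious $\sqrt{\log n}$ that plain matrix Bernstein would give). The only cosmetic difference is that you factor the log-partition function as $\Phi(\mtx{S})+\log\EE_\mu[e^R]$ before bounding, which the paper does implicitly.
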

\begin{proof}
Notice that 
$$
| \Iprod{ \mtx{D} } { \mtx{W} \circ \mtx{W}} | = \frac{1}{n} \big| \sum_{i \le j} D_{ij} x_i^2 x_j^2 \big|  \le  \frac{1}{n} \| \mtx{D}\| \| \vct{x} \|_4^4
 = O(\| \mtx{D} \|). 
$$
Thus, 
\begin{align*}
& \EE_{\mtx{Y}} \left[ \log \left( \int P( \mtx{W'} ) e^{\calH ( \mtx{W'}, \mtx{Y}) }  \diff \mtx{W'}   \right) \right]  \\
&=\EE_{\mtx{Y}} [ \Phi(S) ]  + O \left( \EE_{\mtx{Y}} [ \| \mtx{D} \| ]  \right) . 
\end{align*}
Recall that $D_{ij} = \frac{ \partial^2 \log P_{\rm out} (Y_{ij} | 0)}{\partial w^2} + \frac{1}{\Delta}.$ Then $\mtx{D}$ is a symmetric matrix 
where $\{ D_{ij}\}_{i \le j}$ are independent and identically distributed. Moreover, 
$$
\EE_{\mtx{Y}} [ D_{ij} ]  = \EE_{ Y_{ij} |0 } [ D_{ij}  (1 + O ( |S_{ij}| /\sqrt{n} ) ] = O(1/\sqrt{n}),
$$
and since $S'_{ij}$ is bounded, $
\EE_{\mtx{Y}} [ D_{ij}^4 ] =O(1).$
By Latala's theorem~\cite{Latala05}, $\EE_{\mtx{Y}} [ \| \mtx{D} - \EE_{\mtx{Y}}[ \mtx{D} ] \|  ] =O(\sqrt{n})$. Since $\| \EE_{\mtx{Y}} [ \mtx{D} ] \| =O(\sqrt{n})$.
By triangle's inequality, $\EE_{\mtx{Y}} [ \| \mtx{D}  \| ] = O(\sqrt{n})$, and the lemma follows. 
\end{proof}
Finally, we apply the generalized Lindeberg principle \cite{chatterjee2006,KoradaMontanari11} to show the following lemma. 
Let 
$$
\mtx{U}  = \frac{\mtx{W}}{\Delta} + \frac{ \mtx{\xi } }{\sqrt{\Delta}}, 
$$ 
where $\xi_{ij} \sim \calN(0,1)$ for $i \le j$ and $\xi_{ji}=\xi_{ij}$. 
\begin{lemma}
$$
\EE_{\mtx{Y}} [ \Phi( \mtx{S}) ] = \EE_{\mtx{U}} [ \Phi (\mtx{U}) ] +O(\sqrt{n}).
$$
\end{lemma}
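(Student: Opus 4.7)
The plan is to apply a one-entry-at-a-time generalized Lindeberg swap in the spirit of \cite{chatterjee2006,KoradaMontanari11}, replacing each $S_{ij}$ by the corresponding $U_{ij}$ one at a time and controlling each increment in $\EE[\Phi]$ by a third-order Taylor expansion in the swapped coordinate. More precisely, I would order the $N = n(n+1)/2$ index pairs $(i,j)$ with $i\le j$ and introduce hybrid symmetric matrices $\mtx{V}^{(0)} = \mtx{S},\mtx{V}^{(1)},\dots,\mtx{V}^{(N)} = \mtx{U}$, where $\mtx{V}^{(k)}$ uses the $U$-entry on the first $k$ pairs and the $S$-entry on the rest. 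Then
\begin{equation*}
\EE[\Phi(\mtx{S})] - \EE[\Phi(\mtx{U})] = \sum_{k=1}^{N} \EE\bigl[\Phi(\mtx{V}^{(k-1)}) - \Phi(\mtx{V}^{(k)})\bigr],
\end{equation*}
and it suffices to show each summand is $O(n^{-3/2})$.

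Fix the swap at coordinate $(i,j)$ and condition on $\mtx{W}$ and on all entries of $\mtx{V}^{(k-1)}$ other than the $(i,j)$ one; then $\Phi$ becomes a smooth function $\phi(s)$ of a single real variable, with $\phi^{(r)}(s)$ equal to the $r$-th centered posterior moment of $W'_{ij} = x'_i x'_j/\sqrt{n}$ under the tilted Gibbs measure on $\mtx{x'}$ implicit in the definition of $\Phi$. Since $p(x)$ has finite support, $|W'_{ij}|\le C/\sqrt{n}$ almost surely, so
\begin{equation*}
\sup_s |\phi^{(r)}(s)| \le C^r n^{-r/2}, \qquad r = 1,2,3.
\end{equation*}
Third-order Taylor expansion around $s=0$ gives
\begin{equation*}
\phi(S_{ij}) - \phi(U_{ij}) = \phi'(0)(S_{ij}-U_{ij}) + \tfrac{1}{2}\phi''(0)(S_{ij}^2 - U_{ij}^2) + R,
\end{equation*}
with $|R| \le \tfrac{1}{6}\sup_s|\phi'''(s)|(|S_{ij}|^3 + |U_{ij}|^3) = O(n^{-3/2})(|S_{ij}|^3 + |U_{ij}|^3)$.

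Taking conditional expectation given $\mtx{W}$, the Fisher identities $\EE_{P_{\rm out}(y|0)}[S_{ij}] = 0$ and $\EE_{P_{\rm out}(y|0)}[S_{ij}^2] = 1/\Delta$, together with a Taylor expansion of $P_{\rm out}(y|w)/P_{\rm out}(y|0)$ in $w$ around $0$, yield
\begin{equation*}
\EE[S_{ij}\mid W_{ij}] = W_{ij}/\Delta + O(n^{-1}), \quad \EE[S_{ij}^2\mid W_{ij}] = 1/\Delta + O(n^{-1/2}),
\end{equation*}
whereas $\EE[U_{ij}\mid W_{ij}] = W_{ij}/\Delta$ and $\EE[U_{ij}^2\mid W_{ij}] = 1/\Delta + W_{ij}^2/\Delta^2 = 1/\Delta + O(n^{-1})$. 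Hence the conditional first- and second-moment mismatches are $O(n^{-1})$ and $O(n^{-1/2})$, respectively. Combined with $|\phi'(0)| = O(n^{-1/2})$ and $|\phi''(0)| = O(n^{-1})$, and with the hypothesis $\EE_{P_{\rm out}(y|0)}[|S_{ij}|^3] = O(1)$ together with the obvious $\EE[|U_{ij}|^3] = O(1)$, each of the three Taylor contributions is $O(n^{-3/2})$ per swap; summing over the $N = O(n^2)$ swaps gives the claimed $O(\sqrt{n})$ error.

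The main obstacle is the uniform control of $\phi^{(r)}$ with respect to the other (random) entries of $\mtx{V}^{(k-1)}$, together with the justification of the various interchanges of differentiation with the integral defining $\phi$ and with the outer expectations over $\mtx{W}$, $\mtx{Y}$, and $\mtx{\xi}$. Both are handled by the finite-support hypothesis on $p$---which makes $|W'_{ij}|\le C/\sqrt{n}$ a deterministic bound---together with the dominated convergence theorem, exactly as used in Section~\ref{sec:proof}. A subtler bookkeeping point is that $S_{ij}$ and $U_{ij}$ are not unconditionally independent of the other entries of $\mtx{V}^{(k-1)}$, since both inherit randomness from $\mtx{W}$; this is why the swap is performed inside the conditional expectation given $\mtx{W}$ and the untouched coordinates, with the outer expectation over $\mtx{W}$ taken afterwards to sum the per-swap errors.
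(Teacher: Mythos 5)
Your proposal is correct and follows essentially the same route as the paper: the paper also applies the generalized Lindeberg principle of \cite{chatterjee2006,KoradaMontanari11} with the identical derivative bounds $\partial_{S_{ij}}^r\Phi=O(n^{-r/2})$ and the same first/second-moment mismatches $a_{ij}=O(n^{-1})$, $b_{ij}=O(n^{-1/2})$, yielding $O(\sqrt{n})$ after summing over the $O(n^2)$ pairs. The only difference is that you unfold the entry-by-entry telescoping swap explicitly, whereas the paper delegates it to the cited theorem.
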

Define $a_{ij} = \EE [ S_{ij} ]  - \EE [U_{ij} ]$ and 
$b_{ij} = \EE [ S_{ij}^2] - \EE [ U_{ij}^2]$. 
Note that 
\begin{align*}
\EE_{\mtx{Y}|\mtx{W}} [ S_{ij} ]  & = \EE_{Y_{ij} | 0 } [ S_{ij} ( 1 + W_{ij} S_{ij}  + O(n^{-1} )  )]  \\
& = \frac{W_{ij}}{\Delta} + O(n^{-1}), 
\end{align*}
and 
\begin{align*}
\EE_{\mtx{Y}|\mtx{W}} [ S^2_{ij} ]  & = \EE_{Y_{ij} | 0 } [ S^2_{ij} ( 1   + O( |S_{ij}| n^{-1/2} )  )]  \\
& = \frac{1}{\Delta} + O(n^{-1/2}),  
\end{align*}
where we used the fact that $\EE_{Y_{12}|0} [|S_{ij}|^3 ] =O(1)$. 
Thus $a_{ij} = O(n^{-1})$ and $b_{ij} = O(n^{-1/2})$. Also, one can check that
$\EE [ |S_{ij} |^3 ] + \EE [ |U_{ij}|^3] =O(1)$.
Let $[f(\mtx{W)} ]$ denote the expectation of $f(\mtx{W} ) $ with respect to the measure defined 
by $P(\mtx{W} ) e^{\calH ( \mtx{W}, \mtx{Y}) } \diff \mtx{W}/  \int P(\mtx{W'}) e^{\calH ( \mtx{W'}, \mtx{Y}) } \diff \mtx{W'}  $.  It follows that 
\begin{align*}
\frac{\partial \Phi}{\partial S_{ij} }  &=[ W_{ij} ]  = O(n^{-1/2}) \\
\frac{\partial \Phi^2}{\partial S^2_{ij} } &=[ W_{ij}^2 ] - [W_{ij} ]^2 = O(n^{-1}) \\
\frac{\partial \Phi^3}{\partial S^3_{ij} } & = [ W_{ij}^3] - 3 [W_{ij} ] [W_{ij}^2 ] + 2 [W_{ij} ]^3 = O(n^{-3/2}).
\end{align*}
Therefore, by Lindeberg principle  \cite[Theorem 1.1]{chatterjee2006}, 
\begin{align*}
& \big| \EE_{\mtx{Y}} [ \Phi( \mtx{S} ) ] -  \EE_{\mtx{U}} [ \Phi (\mtx{U}) ] \big| \\
& \le O(n^{-1/2})   \sum_{i\le j} a_{ij} + O(n^{-1}) \sum_{i \le j} b_{ij} + O(\sqrt{n}) \\
&=O(\sqrt{n}). 
\end{align*}

In conclusion, we have shown that 
\begin{align*}
 I(\mtx{W}; \mtx{Y}) & =   \frac{ n \left( \EE[ x^2] \right)^2  }{4\Delta}  -  \EE_{\mtx{U}} [ \Phi (\mtx{U}) ]  +  O(\sqrt{n} )  \\
& =  I(\mtx{W}; \mtx{W}+ \sqrt{\Delta} \; \mtx{\xi} ) +  O(\sqrt{n} ). 
\end{align*}

\subsection{Nishimori identities}
A key ingredient in our interpolation proof is the following Nishimori
identities
\cite{NishimoriBook01,iba1999nishimori,zdeborova2015statistical},
which hold for Bayesian inference problems.

\begin{lemma}[Nishimori identities] \label{lmm:nishimori}
Let $x^\ast$ denote a random sample from a prior distribution $p(x)$,
and we observe $y$ randomly generated from $p(y|x^\ast)$.
Let $x$ and $x'$ denote two independent random samples 
from $p(x|y)$ with the posterior distribution $p(x | y) = p(x) p(y|x)/p(y)$. 
Then for all $f$ such that $\EE[ |f(x, x^\ast)| ] <\infty$, 
$$
\EE[ f(x, x') ] = \EE[ f(x, x^\ast)]. 
$$
\end{lemma}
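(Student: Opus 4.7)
The plan is to show that the unconditional joint laws of $(x,x^\ast)$ and $(x,x')$ coincide, after which the identity follows immediately by taking $\EE[f(\cdot,\cdot)]$. Everything reduces to one observation together with a conditioning argument; no analytic estimates are needed.

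The key fact is a direct application of Bayes' rule: the conditional distribution of the hidden signal $x^\ast$ given the observation $y$ is exactly the posterior used to sample $x$ and $x'$. Indeed, the joint density of $(x^\ast,y)$ is $p(x^\ast)p(y|x^\ast)$, so
\[
p(x^\ast \mid y) \;=\; \frac{p(x^\ast)\,p(y\mid x^\ast)}{p(y)} \;=\; p(x^\ast\mid y)_{\text{posterior}},
\]
which is, as a function of the first argument, the same density $p(\,\cdot\mid y)$ used to define $x$ and $x'$.

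From here I would condition on $y$ and compare. By construction $x$ and $x'$ are drawn independently from $p(\,\cdot\mid y)$, so $(x,x')\mid y$ is a pair of i.i.d.\ samples from $p(\,\cdot\mid y)$. For $(x,x^\ast)\mid y$, the sample $x$ is drawn from $p(\,\cdot\mid y)$ using fresh randomness (independent of $x^\ast$ given $y$), and by the observation above $x^\ast\mid y \sim p(\,\cdot\mid y)$ as well. Hence $(x,x^\ast)\mid y$ is also a pair of i.i.d.\ samples from $p(\,\cdot\mid y)$. Therefore $(x,x^\ast)\mid y \stackrel{d}{=} (x,x')\mid y$, and integrating out $y$ yields $(x,x^\ast)\stackrel{d}{=}(x,x')$ as joint distributions. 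The integrability hypothesis $\EE[|f(x,x^\ast)|]<\infty$ transfers automatically to $f(x,x')$ by this equality in law, which legitimizes taking the expectation on both sides by the tower property and gives $\EE[f(x,x^\ast)] = \EE[f(x,x')]$.

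There is really no ``hard part'' here; the statement is a structural identity, and the only place to be careful is the conditional independence between $x$ and $x^\ast$ given $y$, which holds because the sampling mechanism of $x$ from the posterior uses only $y$ and auxiliary randomness independent of the ground truth. Once this is set up correctly, the identity drops out in one line.
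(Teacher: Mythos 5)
Your proof is correct and is essentially the paper's argument, recast probabilistically: where the paper rewrites $\EE[f(x,x^\ast)]$ as an iterated integral and applies the identity $p(x^\ast)\,p(y\mid x^\ast)=p(x^\ast\mid y)\,p(y)$ together with Fubini, you observe the same Bayes identity and phrase the consequence as ``$(x,x^\ast)$ and $(x,x')$ are conditionally i.i.d.\ given $y$, hence equal in law.'' The two presentations carry identical mathematical content; yours makes the conditional-independence structure explicit, while the paper's three-line integral computation is more compact.
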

\begin{proof}
By definition, 
\begin{align*}
 \EE [ f(x, x^\ast)] 
 & = \int p(x^\ast) \int p(y|x^\ast)  \int f(x, x^\ast) p(x|y)  \diff x  \diff y \diff x^\ast \\
& \overset{(a)}{=}\int  p(y) \int \int  f(x, x^\ast) p(x|y) p(x^\ast |y) \diff x \diff x^\ast \diff y \\
& = \EE[ f(x, x')],
\end{align*}
where $(a)$ follows from  the 
fact that $p(x^\ast) p(y|x^\ast)=p(x^\ast|y)p(y)$ and Fubini's theorem. 
\end{proof}

\subsection{Laplace method}
\label{sec:app_Laplace}
We present the rigorous proof. 
For fixed $\vct{x^\ast}$ and $\vct{z}$, 
let 
$$
G(m, \vct{x^\ast}, \vct{z} ) = \frac{-
    m^2}{2\Delta}+ \frac{1}{n} \sum_i  {\cal J}  \left( \frac{\hat m}{\Delta} ,
  \frac{ m x_i^\ast}{\Delta} +\sqrt{\frac{\hat m}{\Delta} } z_i  \right).
$$
and
%$$
%\overline{G}(m)= \frac{-
%    m^2}{2\Delta}+  \expect{ {\cal J}  \left( \frac{\hat m}{\Delta} ,
%  \frac{ m x_i^\ast}{\Delta} +\sqrt{\frac{\hat m}{\Delta} } z_i  \right)}.
%$$
%Let 
\begin{align*}
\eta(m, x^\ast_i, z_i ) &= \Delta \times  \partial_{m} \calJ 
\left( \frac{\hat m}{\Delta} ,
  \frac{ m x_i^\ast}{\Delta}
+ \sqrt{\frac{\hat m}{\Delta}} z_i \right) \\
& =  [ x_i] x_i^\ast,
\end{align*}
where $[x_i]$ denotes the mean of $x_i$ under the
distribution proportional to $\exp\left( -  \frac{\hat{m} x^2 }{2 \Delta} +x ( \frac{m x_i^\ast}{\Delta} + \sqrt{ \frac{\hat{m}}{\Delta} } z )    \right) p(x) d(x)$.
Since $p(x)$ has a finite support, $\eta$ is bounded. Without loss of generality, 
assume $|\eta| \le C$. 
It follows that $m$ achieving the maximum value of $G(m, \vct{x^\ast}, \vct{z} ) $ must satisfy 
$$
m = \frac{1}{n} \sum_i \eta(m, x_i, z_i). 
$$
Similarly, $m$ achieving the maximum value of $\expect{G(m,x_1,z_1)}$ must satisfy
$$
m = \expect{\eta(m, x_1,z_1)}. 
$$
Let $\calS$ denote the set of the solutions in $[-C, C]$ of the above fixed point 
equation. By assumption, $i'_{\rm L}(m)=0$ has a finite number of solutions and hence 
$|\calS|$ is finite. 
Notice that  
$$
\partial_m \eta(m, x_i, z_i ) 
=\Delta^{-1} (x_i^\ast)^2 \var(x_i) \ge 0.
$$
It follows that $\eta$ is monotone non-decreasing in $m$. 
Let $\delta=n^{-1/4}$. 
Applying \cite[Lemma 9]{korada2009exact}, we get that
\begin{align*}
& \prob{ \sup_{m \in \reals} \bigg|  \frac{1}{n} \sum_{i} \eta(m, x_i, z_i )  - \EE_{x_1, z_1} [ \eta(m, x_1, z_1 )  ] \bigg| \ge \delta   } \\
 & \le e^{-\Omega(n^{1/4})}.
\end{align*}
Let $\calE$ denote the event that the maximum value of $G(m,x_1,z_1)$ must be attained in the 
set $\calG = \cup_{m \in \calS} (m-\delta, m +\delta)$. It follows that $\prob{\calE^c} \le e^{-\Omega(n^{1/4})}.$
Hence, 
\begin{align*}
& \frac{1}{n} \EE[ \log \tilde{Z}_0] \\
 & \le   \EE [ \max_{ m \in [-C, C] } G(m, \vct{x^\ast}, \vct{z} )  ] \\
& \le \EE[ \max_{ m \in [-C, C] } G(m, \vct{x^\ast}, \vct{z} ) | \calE ] \prob{\calE} \\
& ~~+ \EE[ \max_{ m \in [-C, C] } G(m, \vct{x^\ast}, \vct{z} ) |\calE^c  ] \prob{\calE^c} \\
& \le  \EE[ \max_{ m \in \calG } G(m, \vct{x^\ast}, \vct{z} )] +  \left( O(1) + O\left( \expect{|z_1| } \right) \right) \prob{\calE^c}.
\end{align*}
Taking the limit $n \to \infty$ on both hand sides of the above displayed equation, we have that 
$$
\limsup_{n \to \infty} \frac{1}{n} \EE[ \log \tilde{Z}_0]  \le \limsup_{n\to \infty}  \EE[ \max_{ m \in \calG } G(m, \vct{x^\ast}, \vct{z} )]. 
$$
For $m \in \calG$, $| G' (m, \vct{x^\ast}, \vct{z} )| =O(1).$  Thus, 
$$ 
\max_{ m \in \calG } G(m, \vct{x^\ast}, \vct{z} ) = \max_{ m \in \calS } G(m, \vct{x^\ast}, \vct{z} ) + O(\delta).
$$ 
It follows that
\begin{align}
\limsup_{n \to \infty} \frac{1}{n} \EE[ \log \tilde{Z}_0]  \le \limsup_{n\to \infty}  \EE[ \max_{ m \in \calS } G(m, \vct{x^\ast}, \vct{z} )]. 
\label{eq:logpartitionupper}
\end{align}
Notice that 
\begin{align}
 & \EE[ \max_{ m \in \calS } G(m, \vct{x^\ast}, \vct{z} )]  \nonumber \\
 & \le  \EE\left[ \max_{ m \in \calS }  \big| G(m, \vct{x^\ast}, \vct{z} ) - \expect{G(m, \vct{x^\ast}, \vct{z} ) } \big| \right] \nonumber \\
 &~~+ \max_{ m \in \calS } \expect{G(m, \vct{x^\ast}, \vct{z} ) }. \label{eq:Gupper}
\end{align}
Recall that for any fixed $m \in \calS$, 
\begin{align*}
& \big| G(m, \vct{x^\ast}, \vct{z} ) - \expect{G(m, \vct{x^\ast}, \vct{z} ) } \big|  \\
&= \bigg| \frac{1}{n}  \sum_i   {\cal J}  \left( \frac{\hat m}{\Delta} ,
  \frac{ m x_i^\ast}{\Delta} +\sqrt{\frac{\hat m}{\Delta} } z_i  \right) \\
  &~~ - \expect{  {\cal J}  \left( \frac{\hat m}{\Delta} ,
  \frac{ m x_1^\ast}{\Delta} +\sqrt{\frac{\hat m}{\Delta} } z_1  \right) } \bigg| \\
 & :=T(m). 
\end{align*}
Using the fact that $|x_1^\ast|$ is bounded, one can check that $\calJ \left( \frac{\hat m}{\Delta} ,
  \frac{ m x_1^\ast}{\Delta} +\sqrt{\frac{\hat m}{\Delta} } z_1  \right)$ is sub-Gaussian
  with $O(1)$ sub-Gaussian norm. Thus by Chernoff's bound, for any fixed $m \in \calS$ and 
  $t \ge 0$, 
  $$
  \prob{ T(m) > t} \le e^{- \Omega( n t^2 ) }.
  $$
  By a union bound, it follows that with probability at most $|\calS| e^{- \Omega( n t^2 ) }$, 
  $\max_{m \in \calS T(m)} > t$. 
It follows that 
\begin{align*}
& \expect{\max_{m \in \calS} T(m) } \\
&= \int_{0}^\infty \prob{ \max_{m \in \calS} T(m)> t} \diff t \\
& \le \int_0^{\delta} \diff t + \int_{\delta}^\infty  \prob{ \max_{m \in \calS} T(m)> t} \diff t \\
& \le \delta + |\calS| \int_{\delta}^\infty  e^{- \Omega( n t^2 ) } =o(1). 
\end{align*}
In view of \prettyref{eq:Gupper}, we have that 
$$
\EE[ \max_{ m \in \calS } G(m, \vct{x^\ast}, \vct{z} )] =  \max_{ m \in \calS } \expect{G(m, \vct{x^\ast}, \vct{z} ) } +o(1).
$$
Combining the above display with \prettyref{eq:logpartitionupper}, it yields that 
\begin{align*}
& \limsup_{n \to \infty} \frac{1}{n} \EE[ \log \tilde{Z}_0] \\
& \le \max_{ m \in \calS } \expect{G(m, \vct{x^\ast}, \vct{z} ) } \\
&= \max_{ m } \left \{ \frac{-
    m^2}{2\Delta}+  \EE_{x, z} { {\cal J}  \left( \frac{\hat m}{\Delta} ,
  \frac{ m x^\ast}{\Delta} +\sqrt{\frac{\hat m}{\Delta} } z \right)} \right\},
\end{align*}
which completes the proof. 

\balance
\subsection{Interpolating the lower bound}
\label{sec:app_intepolation}
Let us repeat the interpolation strategy of
sec.~\ref{sec:interpolation}. We start with
eq.~(\ref{upp.eq}). Computing the derivative with respect to $t$, we
find that
\begin{align*}
\frac{d}{dt}\EE_{ \vct{x^\ast}, \vct{z },  \mtx{\xi } } \[[\log \tilde Z_t\]] 
= \EE_{ \vct{x^\ast}, \vct{z }, \mtx{\xi }} \[[ \sum_{i\le j}  \EE_t [ \tilde{ {\cal A}}_{ij}]  +\sum_i  \EE_t [ \tilde{{\cal B}}_{i} ] \]],
\end{align*}
where
\begin{align*}
 \tilde{ {\cal A}}_{ij} &=  -\frac {x_i^2 x_j^2}{2\Delta n} + \frac {x_i x_j}{2\sqrt{n \Delta
    t}} \xi_{ij}  \\
\tilde{{\cal B}}_{i}  &= \frac{ \hat m  }{2\Delta} x_i^2 -
\frac 12 \sqrt{\frac{\hat m}{\Delta (1-t)}}   x_iz_i  
\end{align*}
Performing again the integration by part leads to
\begin{align*}
\EE_{t,\mtx{\xi}} [ \tilde{ {\cal A}}_{ij} ] & = - \EE_{t,\mtx{\xi}} \[[ \frac {x_i x_j \EE_t[x_i x_j]}{2n \Delta
}  \]]  \\
 \EE_{t,{\bf z}} [ \tilde{ {\cal B}}_{i}  ] & = \EE_{t,{\bf z}} \[[ x_i\EE_t[x_i ] \frac{\hat m}{2\Delta}\]] \, . 
\end{align*}
Let $x'$ be an independent copy of $x$. Therefore we have
\bea
&&-  \frac{1}{n} \frac{d}{dt}\EE_{\xi,\vct{z},\mtx{x^\ast} }\!\[[ \log \tilde{Z}_t\]] \nonumber \\
\!\!\!\!\!\!&=& \frac{1}{2n^2\Delta}  \sum_{i \le j} 
\EE [x_i x_j x'_i x'_j ] - \frac{\hat
  m}{2\Delta} \sum_i \EE [x_i x'_i ] \nonumber \\
&\ge & \frac{1}{4n^2\Delta} \EE [ \Iprod{\vct{x}} {\vct{x'}}^2 ] -  \frac{\hat
  m}{2\Delta n} \EE[ \Iprod{\vct{x}}{\vct{x'}} ] \nonumber \\
%&=& \frac{n}{4\Delta}\EE_{\xi,\vct{z},\bx_0 }\[[\EE_t [x_i]_t^4 \]]
%- \frac{n \hat  m}{2\Delta} \EE_{\xi,\vct{z},\bx_0 }\[[\EE_t[x_i]_t^2\]]\nonumber \\
& =& \frac{1}{4\Delta} \EE \left[  \Iprod{\vct{x}}{\vct{x'}}/n  - \hat{m} \right] - \frac{\hat
  m^2}{4\Delta} \ge - \frac{\hat  m^2}{4\Delta}\nonumber \, ,
%&=& - \frac{n\hat m^2}{4\Delta}+\frac{n}{4\Delta} \EE_{\xi,\vct{z},\bx_0 } \[[ \EE_t [x_i]_t^4 +
%\hat m^2 - 2 \hat m \EE_t [x_i]_t^2\]] \ge - \frac{n\hat
%  m^2}{4\Delta}\nonumber \, ,
\eea
which, with (\ref{eq:guerra}) and (\ref{fnrg:adhoc}), leads to Theorem (\ref{th:Lowerbound}).

\end{document}